\documentclass[11pt]{article}
\linespread{1.10}

\usepackage{amsmath,amsthm,amssymb,verbatim,color}
\usepackage{graphicx}
\usepackage{wrapfig}
\usepackage{subfigure}
\textwidth6.5in\oddsidemargin0in\evensidemargin0in
\newtheorem{theorem}{Theorem}
\newtheorem{lemma}[theorem]{Lemma}
\newtheorem{fact}[theorem]{Fact}

\newtheorem{conjecture}{Conjecture}
\newtheorem{definition}{Definition}
\newtheorem{construction}{Construction}
\newtheorem*{acknowledgement}{Acknowledgement}

\def\beq{\begin{equation}}\def\eeq{\end{equation}}
\def\beqn{\begin{eqnarray}}\def\eeqn{\end{eqnarray}}

\def\({\mbox{$($}}\def\){\mbox{$)$}}

\def\qed{\ifhmode\unskip\nobreak\fi\quad\ifmmode\Box\else$\Box$\fi}
\begin{document}
\title{Perfect Matchings in 4-uniform hypergraphs}
\author{Imdadullah Khan\\ 
\small Department of Computer Science \\[-0.8ex]
\small Rutgers University\\[-0.8ex]
\small New Brunswick, NJ, USA 08903\\[-0.8ex]
\small \texttt{imdadk@cs.rutgers.edu}\\[-0.8ex]
}
\date{January 10, 2011}

\maketitle

\begin{abstract}
A perfect matching in a $4$-uniform hypergraph is a subset of $\lfloor\frac{n}{4}\rfloor$ disjoint edges. We prove that if $H$ is a sufficiently large $4$-uniform hypergraph on $n=4k$ vertices such that every vertex belongs to more than ${n-1\choose 3} - {3n/4 \choose 3}$ edges then $H$ contains a perfect matching. This bound is tight and settles a conjecture of H{\'a}n, Person and Schacht.
\end{abstract}

\section{Introduction and Notation}
For graphs we follow the notation in \cite{B1}. For a set $T$, we refer to all of its $k$-element subsets ($k$-sets for short) as ${T \choose k}$ and the number of such $k$-sets as ${|T| \choose k}$. We say that $H = (V(H),E(H))$ is an $r$-uniform hypergraph or $r$-graph for short, where $V(H)$ is the set of vertices and $E\subset {V(H) \choose r}$ is a family of $r$-sets of $V(H)$. When the graph referred to is clear from the context we will use $V$ instead of $V(H)$ and will identify $H$ with $E(H)$. For an $r$-graph $H$ and a set $D = \{v_1,\ldots,v_d\} \in {V \choose d}, 1\leq d \leq r$, the degree of $D$ in $H$, $deg_H(D) = deg_r(D)$ denotes the number of edges of $H$ that contain $D$. For $1\leq d \leq r$, let $$\delta_d =\delta_d(H)= \min\left\{deg_r(D) \; : \; D\in {V \choose d}\right\}$$ 

\noindent We say that $H(V_1,\ldots,V_r)$ is an $r$-partite $r$-graph, if there is a partition of $V(H)$ into $r$ sets, i.e. $V(H) = V_1\cup \cdots\cup V_r$ and every edge of $H$ uses exactly one vertex from each $V_i$. We call it a balanced $r$-partite $r$-graph if all $V_i$'s are of the same size. We say $H(V_1,\ldots,V_r)$ is a complete $r$-partite $r$-graph if every $r$-tuple that uses one vertex from each $V_i$ belongs to $E(H)$. We denote a complete balanced $r$-partite $r$-graph by $K^{(r)}(t)$, where $t = |V_i|$. For $r=3$, we refer to the balanced $3$-partite $3$-graph $H(V_1,V_2,V_3)$, where $|V_i|=4$ as a $4\times4\times4$ $3$-graph.
\vskip4pt
\noindent For an $r$-graph $H$, when $A$ and $B$ make a partition of $V(H)$, for a vertex $v\in A$ we denote by $deg_r\left(v,{B\choose r-1}\right)$ the number of $(r-1)$-sets of $B$ that make edges with $v$ while $e_r\left(A,{B\choose r-1}\right)$ is the sum of $deg_r\left(v,{B\choose r-1}\right)$ over all $v\in A$ and $d_r\left(A,{B\choose r-1}\right) =e_r\left(A,{B\choose r-1}\right)/|A|{|B|\choose r-1}$. We denote by $H\left(A,{B\choose r-1}\right)$, such an $r$-graph, when all edges use one vertex from $A$ and $r-1$ vertices from $B$. Similarly $H\left(A, B, {C\choose r-2}\right)$ is an $r$-graph where $A$, $B$ and $C$ make a partition of $V(H)$, and every edge in $H$ uses one vertex each from $A$ and $B$ and $r-2$ vertices from $C$. Degrees of vertices in $A$ and $B$ are similarly defined as above. The density of $H\left(A, B, {C\choose r-2}\right)$ is $$d_r\left(A, B,{C\choose r-2}\right) = \frac{\left|E\left(H\left(A, B,{C\choose r-2}\right)\right)\right|}{|A||B|{|C|\choose r-2}}$$ 

\noindent When $A_1,\ldots,A_r$ make a partition of $V(H)$, for a vertex $v\in A_1$ we denote by $deg_r(v, (A_2\times\cdots\times A_r))$ the number of edges in the $r$-partite $r$-graph induced by subsets $\{v\},A_2,\ldots, A_r$, and $e(A_1, (A_2\times\cdots\times A_r))$ is the sum of $deg_r(v,(A_2\times\cdots\times A_r))$ over all $v \in A_1$. Similarly $$d_r(A_1, (A_2\times\cdots\times A_r)) = \frac{e(A_1, (A_2\times\cdots\times A_r))}{|A_1\times A_2\times\cdots\times A_r|}$$ 

An $r$-graph $H$ on $n$ vertices is $\eta$-{\em dense} if it has at least $\eta {n \choose r}$ edges. We use the notation $d_r(H) \geq \eta$ to refer to an $\eta$-{\em dense} $r$-graph $H$. For $U\subset V$, $H|_U$ is the restriction of $H$ to $U$.  For simplicity we refer to $d_r(H|_U)$ as $d_r(U)$ and to $E(H|_U)$ as $E(U)$. A matching in $H$ is a set of disjoint edges of $H$ and a perfect matching is a matching that contains all vertices. Moreover we will only deal with $r$-graphs on $n$ vertices where $n=rk$ for some integer $k$, we denote this by $n\in r\mathbb{Z}$. Throughout the paper $\log$ denotes the base 2 logarithm.

\begin{definition}
 Let $d,r$ and $n$ be integers such that $1\leq d < r$, and $n\in r\mathbb{Z}$. Denote by $m_d(r,n)$ the smallest integer $m$, such that every $r$-graph $H$ on $n$ vertices with $\delta_d(H) \geq m$ contains a perfect matching. 
\end{definition}

For graphs ($r=2$), by the Dirac theorem on Hamiltonicity of graphs, it s easy to see that $m_1(2,n) \leq n/2$, and since the complete bipartite $K_{n/2-1,n/2+1}$ does not have a perfect matching we get $m_1(2,n) = n/2$.  For $r\geq 3$ and $d = r-1$, it follows from a result of R\"{o}dl, Ruci\'{n}ski and Szemer\'{e}di on Hamiltonicity of $r$-graph \cite{RRSz_HAM_ku_colDeg_approx} that $m_{r-1}(r,n) \leq n/2 + o(n)$. K\"{u}hn and Osthus \cite{KO_PM_ku_colDeg} improved this result to $m_{r-1}(r,n) \leq n/2 + 3r^2\sqrt{n\log n}$. This bound was further sharpened in \cite{RRSz_PM_ku_colDeg} to $m_{r-1}(r,n) \leq n/2 + C\log n$. In \cite{RRSz_PM_ku_colDeg_approx_better} the bound was improved to almost the true value, it was proved that $m_{r-1}(r,n) \leq n/2 + r/4$. Finally \cite{RRSz_PM_ku_colDeg_tight} settled the problem for $d=r-1$. 
\vskip 10pt
\noindent
The case $d<r-1$ is rather hard, in \cite{Pikh_PM_ku_dDeg} it is proved that for all $d\geq r/2$, $m_d(r,n)$ is close to $\frac{1}{2}{n-d \choose r-d}$. For $1\leq d<r/2$  in \cite{HPS_PM_3u_vertDeg} it was proved that $$m_d(r,n) \leq \left(\frac{r-d}{r} +o(1)\right){n-d \choose r-d}$$
A recent survey of these and other related results appear in \cite{Rod_Ruc_Survey}. In \cite{HPS_PM_3u_vertDeg} the authors posed the following conjecture. 
\begin{conjecture}[\cite{HPS_PM_3u_vertDeg}]\label{mainConjecture}
For all $1\leq d < r/2$, 
$$m_d(r,n) \sim \max\left\{\frac{1}{2}, 1-\left(\frac{r-1}{r}\right)^{r-d}\right\}{n-d \choose r-d}$$

\end{conjecture}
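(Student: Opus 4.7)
The plan is to prove matching lower and upper bounds for $m_d(r,n)$, both asymptotic to $\max\{1/2,\,1-((r-1)/r)^{r-d}\}\binom{n-d}{r-d}$. The lower bound comes from two extremal constructions; the upper bound is a three-stage argument in the spirit of the absorbing method of R\"odl--Ruci\'nski--Szemer\'edi, extended from the codegree regime to the low-degree regime $d<r/2$.

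\textbf{Lower bound.} Two families of $r$-graphs are needed. The \emph{space barrier} $H_{\mathrm{sp}}$: partition $V=A\cup B$ with $|A|=\lfloor n/r\rfloor-1$ and take every $r$-set meeting $A$ as an edge. No perfect matching exists, since $n/r$ disjoint edges would need at least $n/r$ vertex-slots in $A$ while $|A|<n/r$. The minimum-degree $d$-set lies inside $B$ with degree $\binom{n-d}{r-d}-\binom{|B|-d}{r-d}\sim \bigl(1-((r-1)/r)^{r-d}\bigr)\binom{n-d}{r-d}$. The \emph{divisibility barrier} $H_{\mathrm{par}}$: take $V=A\cup B$ with $|A|,|B|$ within $1$ of $n/2$, $|A|$ odd, and include those $r$-sets whose intersection with $A$ has parity opposite to $k \pmod 2$, where $k=n/r$. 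A counting argument on $\sum_{e\in M}|e\cap A|\pmod 2$ shows any perfect matching $M$ forces $|A|$ to be even, precluding one, while a balanced binomial-sum estimate gives every $d$-set degree $\sim\tfrac{1}{2}\binom{n-d}{r-d}$. The pointwise maximum of these two bounds is the conjecture's lower bound.

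\textbf{Upper bound via absorbing.} Given an $r$-graph $H$ with $\delta_d(H)$ just above the stated threshold, we carry out three stages. \emph{(i) Absorbing lemma:} build a matching $M_{\mathrm{abs}}\subset H$ with $|V(M_{\mathrm{abs}})|=o(n)$ such that, for every $U\subset V\setminus V(M_{\mathrm{abs}})$ with $|U|\le \eta n$ and $|U|\in r\mathbb{Z}$, the hypergraph $H|_{V(M_{\mathrm{abs}})\cup U}$ admits a perfect matching. One counts, for each $r$-set $S\subset V$, the number of constant-size \emph{absorbing gadgets} that can be perfectly matched both with and without $S$; the minimum-$d$-degree hypothesis combined with supersaturation yields $\Omega(n^{c})$ such gadgets per $S$, and a random subselection provides the desired $M_{\mathrm{abs}}$. \emph{(ii) Almost-perfect matching:} find a matching in $H\setminus V(M_{\mathrm{abs}})$ that covers all but $o(n)$ of the remaining vertices. \emph{(iii) Completion:} apply the absorbing property of $M_{\mathrm{abs}}$ to the leftover $U$, producing a perfect matching of $H$.

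\textbf{Main obstacle.} Stage (ii) is the crux. For $d<r/2$ the minimum $d$-degree condition does not immediately force a near-perfect matching, so the argument splits according to the global structure of $H$. In the \emph{non-extremal case}, where $H$ is $\eps$-far in edit distance from both $H_{\mathrm{sp}}$ and $H_{\mathrm{par}}$, a weak hypergraph regularity decomposition reduces the task to finding a fractional matching of size $(1-o(1))n/r$ in a weighted cluster multi-$r$-graph, which exists by LP duality applied to the degree hypothesis and can be rounded to an almost-integral matching by random sampling within regular blocks. In the \emph{extremal case}, where $H$ is $\eps$-close to one of the barriers, one exploits the slack between $\delta_d(H)$ and the barrier threshold to correct the structural obstruction by local edge swaps across the partition, producing a perfect matching directly. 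A second technically delicate point is stage (i): for small $d$ a naive greedy construction of absorbers fails, and one must design explicit combinatorial configurations whose existence under $\delta_d(H)$ is certified by an iterative supersaturation argument, with gadget size growing with $r-d$. Packaging these ingredients together, while controlling all error terms, would establish Conjecture~\ref{mainConjecture}.
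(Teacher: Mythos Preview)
The paper does not prove Conjecture~\ref{mainConjecture}; it is stated there as an open conjecture and only the single case $r=4$, $d=1$ is settled, as Theorem~\ref{ourMainThm}. So there is no ``paper's own proof'' of the full statement to compare against. Your proposal is therefore not a comparison target but an outline of how one might attack the general conjecture, and it should be judged as such.

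As a plan, it has real gaps. The absorbing step (i) is asserted, not proved: the absorbing lemma actually available (Lemma~\ref{absorbLemma}, from \cite{HPS_PM_3u_vertDeg}) needs $\delta_1(H)\ge(1/2+2\eta){n\choose r-1}$, and there is no mechanism in your sketch for converting a $\delta_d$ hypothesis with $d>1$ into this, nor for building absorbers directly from $\delta_d$ when $d<r/2$. Stage (ii) is the heart of the problem and you dispose of it in two sentences; ``weak regularity plus LP duality plus rounding'' is a slogan, not an argument, and in particular the fractional-matching existence in the cluster hypergraph does not follow from the degree hypothesis without substantial further work. The extremal analysis is likewise only named, not carried out. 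In short, what you have written is a reasonable table of contents, but none of the chapters are written.

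For contrast, even in the special case $r=4$, $d=1$ the paper does \emph{not} use regularity for the non-extremal almost-perfect matching. Instead it runs an explicit iterative cover-extension procedure: it maintains a collection of vertex-disjoint balanced complete $4$-partite $4$-graphs, and repeatedly enlarges the covered vertex set by analysing, for each triple of parts, a $4\times4\times4$ auxiliary ``link'' $3$-graph (Lemma~\ref{ext_categories}) and then applying a tailored extension lemma (Lemma~\ref{extensionLemma}). The extremal case is handled by a direct K\"onig--Hall argument after cleaning up exceptional vertices. This machinery is highly specific to $(r,d)=(4,1)$ and does not generalise in any obvious way; that is precisely why the full conjecture was open.
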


\noindent Note that for $r=4$ and $d=1$ the above bound yields $$m_1(3,n) \sim \frac{37}{64}{n-1 \choose 3}$$
\noindent For $r=3$ and $d=1$ this conjecture was proved in \cite{Khan_3u_vertDeg} and \cite{KOT_parallel}. Markst{\"o}m and Ruci\'{n}ski in \cite{Ruc_Marks_approx4u} improved the bound on $m_d(r,n)$ for $1\leq d < r/2$ slightly, by proving $$m_d(r,n) \leq \left(\frac{r-d}{r} -\frac{1}{r^{r-d}} + o(1)\right){n-d \choose r-d}$$ Furthermore for $r=4$ and $d=1$ in \cite{Ruc_Marks_approx4u} the authors proved that $m_1(4,n) \leq \left( \frac{42}{64} + o(1)\right){n-1\choose 3}$. In this paper we settle Conjecture \ref{mainConjecture} for the case $r=4$ and $d=1$. The main result in this paper is the following theorem. 
\begin{theorem}\label{ourMainThm}
There exist an integer $n_0$ such that if $H$ is a $4$-graph on $n \geq n_0$ ($n\in 4\mathbb{Z}$) vertices, and \beq\label{minDegree}\delta_1(H) \geq {n-1\choose 3} - {3n/4 \choose 3} + 1 \eeq  then $H$ has a perfect matching.
\end{theorem}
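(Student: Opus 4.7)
The plan is to follow the partition-and-absorb framework that has become standard for tight perfect matching results in hypergraphs, previously used for the analogous $r=3$, $d=1$ problem in \cite{Khan_3u_vertDeg,KOT_parallel}. Fix small constants $\gamma \gg \eps > 0$. I would call $H$ \emph{extremal} if there is a set $B\subset V(H)$ with $|B|=3n/4$ whose induced sub-$4$-graph $H|_B$ contains at most $\gamma\binom{3n/4}{4}$ edges, and \emph{non-extremal} otherwise. This dichotomy is dictated by the tightness configuration: taking $|A|=n/4-1$, $|B|=3n/4+1$, and $E(H)=\{e:e\cap A\neq\emptyset\}$ gives $\delta_1 = \binom{n-1}{3}-\binom{3n/4}{3}$ and no perfect matching, since every matching edge needs a private vertex in $A$.

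In the non-extremal case I would deploy the absorbing method. For a $4$-set $T\in\binom{V}{4}$, call a disjoint $8$-set $W$ an \emph{absorber} for $T$ if both $H[W]$ and $H[W\cup T]$ admit perfect matchings (of $2$ and $3$ edges, respectively). Using the hypothesis on $\delta_1$ together with non-extremality, I would show that every $T$ has $\Omega(n^8)$ absorbers; a standard probabilistic selection then produces an absorbing matching $M_{\text{abs}}$ of size at most $\eps n$ that can swallow any residual vertex set of size at most $\eps^2 n$ (a multiple of $4$). After deleting $V(M_{\text{abs}})$, I would apply the approximate perfect matching result of Markst\"om and Rucinski \cite{Ruc_Marks_approx4u} to cover all but $o(n)$ of the remaining vertices, and finally close up by absorption. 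The main quantitative step here is the lower bound on the number of absorbers, which I would obtain by first proving that in a non-extremal $4$-graph every vertex has many ``typical'' neighborhoods and then chaining three such neighborhoods through a common connector.

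The extremal case is where the tightness bites. Suppose $H|_B$ is almost empty with $|B|=3n/4$, so $A=V\setminus B$ has size $n/4$; virtually every edge of $H$ meets $A$, and a perfect matching should place exactly one vertex of $A$ and three of $B$ into each edge. I would partition $B$ into three parts $B_1,B_2,B_3$ of size $n/4$ and look for a perfect matching in the balanced $4$-partite $4$-graph $H^\ast=H(A,B_1,B_2,B_3)$. My plan is: first, classify vertices as ``typical'' or ``exceptional'' according to whether their degree into the chosen partite classes is close to the maximum; second, remove a short matching covering all exceptional vertices, using the fact that the minimum degree exceeds the extremal value by $1$ to guarantee at least one usable edge at each such vertex; third, complete to a perfect matching of $H^\ast$ on what remains by invoking a perfect matching theorem for dense balanced $4$-partite $4$-graphs (the $4\times 4\times 4$ building block mentioned in the preliminaries suggests the author intends a structural lemma of this form).

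The principal obstacle I anticipate is the extremal case: the hypergraph is only one edge per vertex away from the unmatchable configuration, so every vertex must be used in exactly the right way and no greedy slack is available. The delicate point is to cover the exceptional vertices without damaging the density needed to finish via the $4$-partite matching lemma, and in particular to argue that the ``$+1$'' in $\delta_1$ suffices to reroute any initially bad vertex through a good edge. The non-extremal side is more mechanical once the $\Omega(n^8)$ absorber count is in hand, although even that count requires a careful translation of non-extremality into many local copies of suitably chosen small configurations.
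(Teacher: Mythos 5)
Your overall architecture (extremal/non-extremal dichotomy, absorption in the non-extremal case, a Hall-type finish in the extremal case) matches the paper, but there is a genuine gap at the heart of the non-extremal case: the step where you ``apply the approximate perfect matching result of Markst\"om and Ruci\'nski \cite{Ruc_Marks_approx4u} to cover all but $o(n)$ of the remaining vertices.'' That result is a \emph{perfect} matching theorem requiring $\delta_1 \geq \left(\frac{42}{64}+o(1)\right)\binom{n-1}{3}$, which is strictly above the bound available here, namely roughly $\frac{37}{64}\binom{n-1}{3}$ (and after deleting the absorbing matching the degree drops to about $\left(\frac{37}{64}-O(\sqrt{\gamma})\right)\binom{n}{3}$, slightly \emph{below} the asymptotic space-barrier threshold for covering all but $o(n)$ vertices: the construction with $|A|=(1/4-c)n$ shows no almost perfect matching exists at degree $\frac{37}{64}-\varepsilon$ without a non-extremality hypothesis). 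So no off-the-shelf theorem gives the almost perfect matching; non-extremality must be used in an essential, quantitative way at exactly this point, and a greedy argument stalls once $3n/4$ vertices remain. This ``almost perfect matching or else extremal'' dichotomy is precisely the paper's main technical content (Sections 2--4): an iterative cover by complete balanced $4$-partite $4$-graphs obtained from Erd\H{o}s's theorem, extension of the cover via link graphs, and the classification of $4\times4\times4$ $3$-graphs with at least $37$ edges, with the sole non-extendable configuration $H_{ext}$ forcing the extremal structure. Your remark that the non-extremal side is ``mechanical once the absorber count is in hand'' inverts the difficulty: the absorber count needs only $\delta_1\geq(1/2+\eta)\binom{n}{3}$ and is quoted from \cite{HPS_PM_3u_vertDeg}; the hard part is the step you outsource to a theorem that does not apply.

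Your extremal case is essentially the paper's route (classify exceptional vertices, spend the ``$+1$'' on the worst ones, finish by K\"onig--Hall), with two caveats. First, the paper does not fix a tripartition of $B$; it covers $B$ by disjoint ``good'' triples (using a $3$-uniform perfect matching result) and then applies Hall between $A$ and these triples, so the ``dense balanced $4$-partite perfect matching theorem'' you invoke would itself need a proof along those lines. Second, ``at least one usable edge at each exceptional vertex'' is not enough for the strongly exceptional vertices of $A$, which may have \emph{no} edge into $\binom{B}{3}$: one must first argue (by an exchange argument) that one of $SX_A, SX_B$ can be emptied, and then find \emph{disjoint} edges for all of $SX_A$ simultaneously inside $SX_A\cup B$, which is exactly where the paper's degree computation uses the $+1$.
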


\noindent On the other hand the following construction from \cite{HPS_PM_3u_vertDeg} shows that this result is tight. 

\begin{construction}\label{extConstruction}
Let $A$ and $B$ be disjoint sets with $|A| = \frac{n}{4}-1$ and $|B|= n-|A|$. Let $H = (V(H),E(H))$ be a $4$-graph such that $V(H) = A\cup B$ and $E(H)$ is the set of all $4$-tuples of vertices, $T$, such that $|T\cap A|\geq 1$.
\end{construction}

\noindent We have $\delta_1(H) = {n-1\choose 3} - {3n/4\choose 3}$ (the degree of a vertex in $B$) but since every edge in a matching must use at least one vertex from $A$, the maximum matching in this graph is of size $|A|=\frac{n}{4}-1$
\begin{figure}[h!] 
\centering
\includegraphics[scale=0.55]{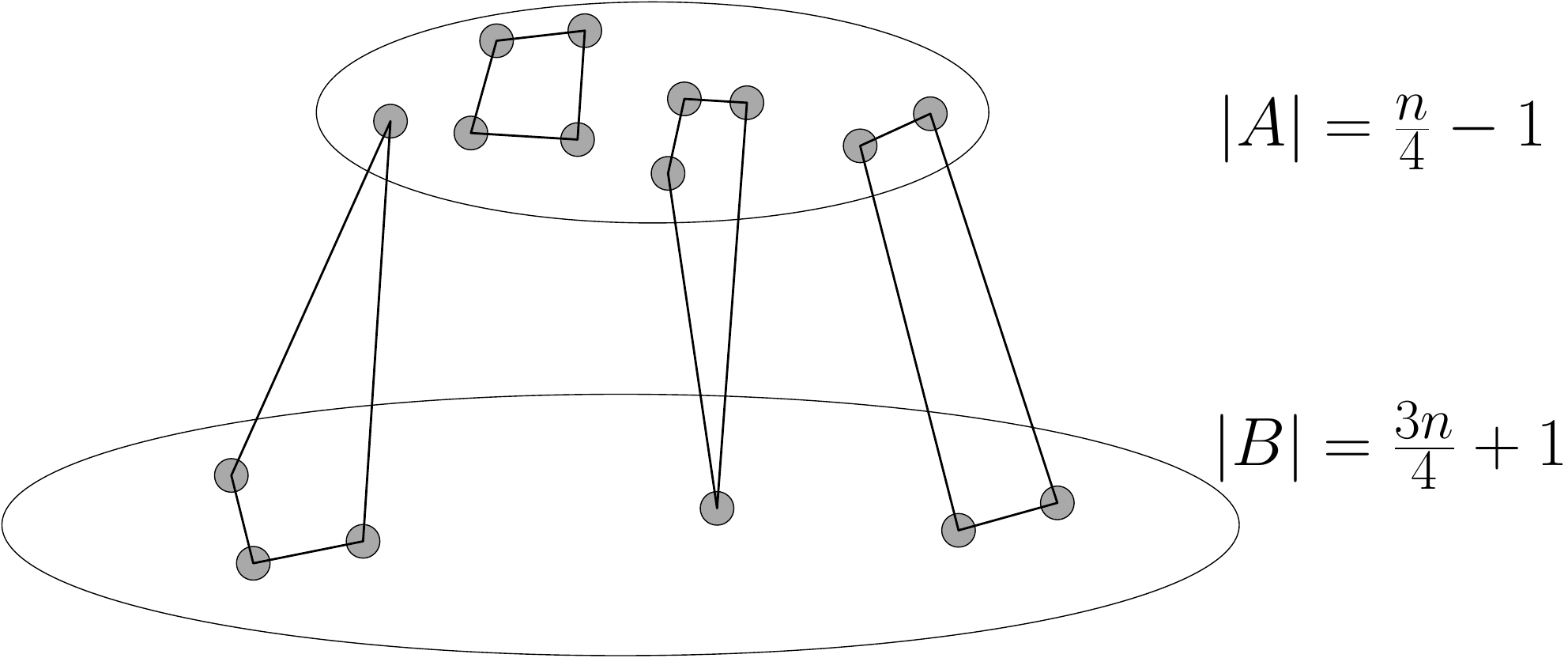} 
\caption{\footnotesize{The extremal example: A Quadrilateral represent an edge. Every edge intersects the set $A$.}}
\label{ext_example}
\end{figure}

\subsection{Outline of the proof}
We distinguish two cases to prove Theorem \ref{ourMainThm}. When $H$ is {\em `far from'} the extremal example as in Construction \ref{extConstruction} we use the {\em absorbing lemma} to find a perfect matching in $H$. The absorbing lemma (Lemma \ref{absorbLemma}) roughly states, that any $3$-graph $H$ satisfying (\ref{minDegree}), contains a small matching $M$ with the property that every ``not too large'' subset of vertices can be absorbed into $M$. In Section \ref{non_ext_case}, we first remove an absorbing matching $M$ from $H$. Then using the tools developed in section \ref{tools} and the auxiliary results of section \ref{aux_results}, we find an {\em almost perfect matching} in $H|_{V\setminus V(M)}$. The left over vertices are absorbed into $M$ to get a perfect matching in $H$. We will show that we can either find an almost perfect matching or $H$ is in the extremal case.
\vskip 6pt
For a constant $0<\alpha <1$, we say that $H$ is {\em $\alpha$-extremal}, if the following is satisfied, otherwise it is {\em $\alpha$-non-extremal}. 
\begin{definition}[Extremal Case with parameter $\alpha$]
There exists a $B\subset V(H)$ such that 
\begin{itemize}
\item $|B|\geq \left(\frac{3}{4}-\alpha\right) n$
\item $d_4\left(B\right) < \alpha$.
\end{itemize}
\end{definition}

\noindent In section \ref{extCase}, using a K{\"o}nig-Hall type argument, we build a perfect matching in $H$, when it is $\alpha$-extremal.

\section{Main tools}\label{tools}


We use the following two easy observations that will, nevertheless, be useful later on.
\vskip4pt
\begin{fact}\label{denseSubgraph}
If $G(A,B)$ is a $2\eta$-dense bipartite graph, then there must be at least $\eta|B|$ vertices in $B$ for which the degree in $A$ is at least $\eta|A|$.
\end{fact}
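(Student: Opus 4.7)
The plan is a one-line double counting argument. Let $e(G)$ denote the number of edges of the bipartite graph $G(A,B)$, so by hypothesis $e(G) \ge 2\eta |A||B|$. Let $B' = \{b \in B : \deg_A(b) \ge \eta |A|\}$ be the set of ``high degree'' vertices on the $B$-side, and let $B'' = B \setminus B'$.

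First I would bound the contribution to $e(G)$ from the two parts of $B$ separately. Vertices in $B''$ each have degree strictly less than $\eta|A|$ in $A$, so
\[
\sum_{b \in B''} \deg_A(b) < \eta|A| \cdot |B''| \le \eta |A||B|.
\]
Vertices in $B'$ each have degree at most $|A|$ (the trivial bound), so
\[
\sum_{b \in B'} \deg_A(b) \le |A| \cdot |B'|.
\]

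Now I would suppose for contradiction that $|B'| < \eta |B|$. Adding the two bounds,
\[
2\eta |A||B| \;\le\; e(G) \;=\; \sum_{b \in B'} \deg_A(b) + \sum_{b \in B''} \deg_A(b) \;<\; |A|\cdot |B'| + \eta |A||B| \;<\; \eta|A||B| + \eta|A||B|,
\]
which contradicts the density assumption. Hence $|B'| \ge \eta|B|$, as required.

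There is no real obstacle here; the statement is a standard Markov-type bound for bipartite graphs. The only thing to watch is the choice of the trivial upper bound $|A|$ on high-degree vertices (rather than something sharper), which is exactly what makes the two error terms sum to $2\eta|A||B|$ and match the density hypothesis. \qed
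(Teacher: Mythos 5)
Your proof is correct and is essentially the paper's own argument: the paper likewise splits the edge count into the contribution of low-degree vertices (at most $\eta|A|\,|B|$) and of the supposedly fewer than $\eta|B|$ high-degree vertices (at most $\eta|B|\,|A|$), obtaining the same contradiction with $2\eta$-density. Your write-up just makes the double counting explicit; no difference in substance.
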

\vskip2pt
\noindent Indeed, otherwise the total number of edges would be less than
$$\eta |A||B| + \eta |A||B| = 2\eta |A||B|$$ a contradiction to the fact that $G(A,B)$ is $2\eta$-dense.
\vskip8pt
\begin{fact}\label{subsetsPHP}
If $G(A,B)$ is a bipartite graph, $|A| = c_1\log n$, $|B|= c_2n$ and for every vertex $ b\in B$ $deg(b,A)\geq \eta |A|$, then we can find a complete bipartite subgraph $G'(A',B')$ of $G$ such that $A'\subset A, B'\subset B, |A'|\geq \eta |A| \text{ and } |B'|\geq c_2n^{(1-c_1)}$.
\end{fact}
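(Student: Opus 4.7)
The plan is a straightforward double-counting / pigeonhole argument, exploiting the fact that $|A| = c_1 \log n$ makes the total number of subsets of $A$ merely polynomial in $n$.

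First I would associate with every vertex $b \in B$ some subset $S_b \subseteq N(b) \cap A$ with $|S_b| = \lceil \eta |A| \rceil$, which is possible since $\deg(b, A) \geq \eta |A|$ by hypothesis. This defines a map $b \mapsto S_b$ from $B$ into the family of subsets of $A$. The key observation is that the codomain has size at most
\[
2^{|A|} \;=\; 2^{c_1 \log n} \;=\; n^{c_1}.
\]

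Next I would apply the pigeonhole principle: since $|B| = c_2 n$, some particular subset $A' \subseteq A$ must arise as $S_b$ for at least
\[
\frac{|B|}{n^{c_1}} \;=\; \frac{c_2 n}{n^{c_1}} \;=\; c_2 n^{1-c_1}
\]
vertices $b$. Let $B'$ be the set of all such $b$. By construction $|A'| = \lceil \eta |A| \rceil \geq \eta |A|$, and $|B'| \geq c_2 n^{1-c_1}$. Moreover, for every $b \in B'$ we have $A' = S_b \subseteq N(b)$, so every vertex of $B'$ is joined to every vertex of $A'$ in $G$. Hence $G'(A',B')$ is the desired complete bipartite subgraph.

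There is no real obstacle here; the only thing to be slightly careful about is that we are counting \emph{all} subsets of $A$ rather than just $\eta|A|$-subsets, which is wasteful but perfectly adequate because the polylogarithmic size of $A$ already forces $2^{|A|}$ to be only polynomial in $n$. The tightness of the conclusion comes from the exponent $1 - c_1$, which exactly reflects this pigeonhole loss.
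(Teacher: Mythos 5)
Your proof is correct and follows essentially the same pigeonhole argument as the paper: both count the at most $2^{|A|} = n^{c_1}$ possible neighborhood patterns in $A$ and find one shared by at least $c_2 n^{1-c_1}$ vertices of $B$ (the paper pigeonholes full neighborhoods rather than truncated $\lceil \eta|A|\rceil$-subsets, an immaterial difference).
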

\vskip2pt
\noindent To see this consider the neighborhoods in $A$, of the vertices in $B$. Since there can be at most $2^{|A|} = n^{c_1}$ such
neighborhoods, by averaging there must be a neighborhood that appears for at least $\frac{c_2n}{n^{c_1}}=c_2n^{(1-c_1)}$
vertices of $B$. This means that we can find the desired complete bipartite graph.
\vskip8pt
\noindent The main tool in this paper is the following result of Erd\"os \cite{ErdosKST}, on complete balanced $r$-partite subhypergraphs of $r$-graphs.

\begin{lemma}\label{hyperKST}
For every integer $l\geq 1$ there is an integer $n_0 = n_0(r,l)$ such that: Every $r$-graph on $n> n_0$ vertices, that has at least $n^{r-1/l^{r-1}}$ edges, contains a $K^{(r)}(l)$.
\end{lemma}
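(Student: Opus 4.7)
The plan is to proceed by induction on $r$, reducing the $r$-uniform case to the $(r-1)$-uniform case by fixing $l$ vertices and looking at their common link.

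For the base case $r = 2$, the statement is exactly the classical Kővári–Sós–Turán theorem: a graph on $n$ vertices with more than $n^{2 - 1/l}$ edges contains $K_{l,l} = K^{(2)}(l)$. The usual double-counting proof suffices: count pairs $(v, L)$ with $v \in V$, $L \in \binom{V}{l}$, and $v$ adjacent to every vertex of $L$. Jensen's inequality applied to the degree sequence gives a lower bound of roughly $n \binom{2e/n}{l}$, while the absence of $K_{l,l}$ forces an upper bound of $(l-1)\binom{n}{l}$. Comparing the two inequalities and taking $n$ large enough yields the claim.

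For the inductive step, assume the lemma for $(r-1)$-graphs and let $H$ be an $r$-graph on $n$ vertices with $|E(H)| \geq n^{r - 1/l^{r-1}}$. For $T \in \binom{V}{r-1}$ denote by $d(T)$ the codegree, i.e., the number of $v \in V \setminus T$ with $T \cup \{v\} \in E(H)$. Then $\sum_T d(T) = r\,|E(H)|$, so the average codegree $\bar d$ is at least a constant multiple of $n^{1 - 1/l^{r-1}}$. I would then pick $V_1 \subset V$ uniformly at random among $l$-subsets and define an auxiliary $(r-1)$-graph $H'$ on $V \setminus V_1$ whose edges are those $(r-1)$-sets $T$ for which $T \cup \{v\} \in E(H)$ for every $v \in V_1$. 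Because $N(T) \cap T = \emptyset$ by definition, a uniformly random $l$-subset of $V$ lies inside $N(T)$ with probability $\binom{d(T)}{l}/\binom{n}{l}$, so
\[
\mathbb{E}\,|E(H')| \;=\; \sum_{T \in \binom{V}{r-1}} \frac{\binom{d(T)}{l}}{\binom{n}{l}} \;\geq\; \binom{n}{r-1} \cdot \frac{\binom{\bar d}{l}}{\binom{n}{l}},
\]
where the final step uses Jensen's inequality on the convex map $x \mapsto \binom{x}{l}$ (restricted to the bulk of $T$'s where $d(T) \geq l-1$, which dominates the sum for $n$ large). Since $(1 - 1/l^{r-1})\cdot l = l - 1/l^{r-2}$, the exponent of $n$ in this lower bound is exactly $r - 1 - 1/l^{r-2}$. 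Fixing a $V_1$ that attains at least the expectation and applying the induction hypothesis to $H'$ (on $n - l$ vertices) produces disjoint parts $V_2, \ldots, V_r \subset V \setminus V_1$ forming a $K^{(r-1)}(l)$; together with $V_1$ they form the required $K^{(r)}(l)$.

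The main technical obstacle is the constant bookkeeping. Jensen's inequality loses multiplicative factors, and one must verify that these, together with the constants hidden in $\binom{n}{r-1}/\binom{n}{l}$ and the loss from passing from $n$ to $n - l$ when invoking the inductive hypothesis, are all absorbed by the polynomial slack in $n$. Because the exponents of $n$ match exactly at each level of the induction, this reduces to choosing $n_0(r,l)$ sufficiently large, which is a routine if somewhat tedious verification rather than a genuine difficulty.
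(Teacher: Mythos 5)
The paper does not prove this lemma at all: it is quoted verbatim from Erd\H{o}s \cite{ErdosKST}, so there is no in-paper argument to compare against. Your proof is, in substance, Erd\H{o}s's original one --- induction on the uniformity $r$ with the K\H{o}v\'ari--S\'os--Tur\'an theorem as the base case, double counting pairs $(T,L)$ with $T$ an $(r-1)$-set completely joined to the $l$-set $L$, convexity of $x\mapsto\binom{x}{l}$, and an application of the inductive hypothesis to the link $(r-1)$-graph of a well-chosen $V_1$. The structure is correct: the observation that $N(T)\cap T=\emptyset$, the identity $\sum_T d(T)=r|E(H)|$, the exponent computation $(1-1/l^{r-1})l = l-1/l^{r-2}$, and the assembling of $V_1$ with the $K^{(r-1)}(l)$ found in $H'$ are all as they should be (one should also note that the $O(n^{r-2})$ many $(r-1)$-sets meeting $V_1$ are negligible, which is automatic for $l\geq 2$ and irrelevant for $l=1$).

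The one place where your write-up is not right as stated is the closing remark about constants. You say the multiplicative losses are ``absorbed by the polynomial slack in $n$'' and in the same breath that ``the exponents of $n$ match exactly at each level''; these two statements are in tension, because when the exponents match exactly there is no polynomial slack: if your lower bound were $c\,n^{r-1-1/l^{r-2}}$ with $c<1$, no choice of $n_0$ would let you invoke the inductive hypothesis, which demands coefficient exactly $1$ at $n-l$ vertices. What actually saves the induction is that the averaging produces a constant factor strictly larger than $1$: from $\bar d = r|E(H)|/\binom{n}{r-1} \geq r!\,|E(H)|/n^{r-1} \geq r!\,n^{1-1/l^{r-1}}$ one gets
\[
\mathbb{E}\,|E(H')| \;\geq\; \binom{n}{r-1}\Bigl(\tfrac{\bar d-l+1}{n}\Bigr)^{l} \;\geq\; (1-o(1))\,\frac{(r!)^{l}}{(r-1)!}\;n^{\,r-1-1/l^{r-2}},
\]
and $(r!)^{l}/(r-1)!\geq r\geq 3$ in every inductive step, which comfortably beats the $1+o(1)$ losses and the passage from $n$ to $n-l$. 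So your proof does close, but the justification for the bookkeeping should be this favorable constant gain rather than nonexistent slack in the exponent; as written, that step is asserted for a reason that would not withstand scrutiny.
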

\noindent In particular Lemma \ref{hyperKST} implies that for $\eta>0$ and a sufficiently small constant $\beta$, if $H$ is an $r$-graph on $n>n_0(\eta,\beta)$ vertices with $$|E(H)|\geq \eta{n\choose r}$$ then $H$ contains a $K^{(r)}(t)$, where $$t=\beta (\log n)^{1/r-1}.$$
\vskip5pt
\noindent This is so because $\eta{n\choose r} \geq \frac{n^r}{n^{1/\beta^{r-1} \log n}} = \frac{n^r}{2^{1/\beta^{r-1}}}$, when $\eta > \frac{2r!}{2^{1/\beta^{r-1}}}$.\\

\noindent The following three lemmas are repeatedly used in section \ref{non_ext_case}.
\vskip4pt

\begin{lemma}\label{subsetPHP_hyper}
Let $H(A,{B\choose 3})$ be a $4$-graph, $|A|= c_1 m$, $|B|=c_2 2^{m^3}$, for some constants $0<c_1,c_2<1$, if $$d_4\left(A,{B\choose 3}\right)\geq 2\eta$$ then there exists a complete $4$-partite $4$-graph $H'(A_1,B_1,B_2,B_3)$, with $A_1\subset A$ and $B_1,B_2\mbox{ and } B_3$ are disjoint subsets of $B$ such that $|A_1|=|B_1|=|B_2|=|B_3| = \eta|A|$.  
\end{lemma}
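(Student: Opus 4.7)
The plan is to view $H$ as a bipartite graph between $A$ and $\binom{B}{3}$ and extract the complete $4$-partite structure in two stages: first a pigeonhole over $A$-neighborhoods to fix the set $A_1$, then Erd\H{o}s' Lemma \ref{hyperKST} applied to the resulting $3$-uniform link on $B$ to fix $B_1,B_2,B_3$. The hypotheses of the lemma are arranged precisely so that $|B|$ is doubly-exponential in $m$, which is what will allow the second step to absorb the loss from the first.

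First I would form the bipartite graph $G$ on $A\cup\binom{B}{3}$ with $(v,T)\in G$ iff $\{v\}\cup T\in E(H)$. The density hypothesis $d_4(A,\binom{B}{3})\ge 2\eta$ says $G$ is $2\eta$-dense, so Fact \ref{denseSubgraph} yields a family $\mathcal{T}\subseteq\binom{B}{3}$ with $|\mathcal{T}|\ge \eta\binom{|B|}{3}$ and $|N_G(T)|\ge \eta|A|$ for every $T\in\mathcal{T}$. Since there are at most $2^{|A|}=2^{c_1m}$ subsets of $A$, pigeonholing the sets $N_G(T)$ over $T\in\mathcal{T}$ (exactly as in the proof of Fact \ref{subsetsPHP}) then produces a single $A_1\subseteq A$ with $|A_1|\ge \eta|A|$ and a sub-family $\mathcal{T}^*\subseteq\mathcal{T}$ with $|\mathcal{T}^*|\ge \eta\binom{|B|}{3}/2^{c_1m}$ such that $A_1\subseteq N_G(T)$ for every $T\in\mathcal{T}^*$; equivalently, $\{v\}\cup T\in E(H)$ for all $v\in A_1$ and all $T\in\mathcal{T}^*$.

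Finally I would view $\mathcal{T}^*$ as a $3$-uniform hypergraph on $B$ and invoke Lemma \ref{hyperKST} with $l=\eta|A|=\eta c_1 m$ to extract disjoint $B_1,B_2,B_3\subseteq B$ of size $\eta|A|$ with $B_1\times B_2\times B_3\subseteq\mathcal{T}^*$; together with $A_1$ (trimmed to exactly $\eta|A|$) these give the required complete $4$-partite $4$-graph. The one nontrivial check, and therefore the main obstacle, is verifying the hypothesis of Lemma \ref{hyperKST}, namely $|\mathcal{T}^*|\ge |B|^{3-1/l^2}$: this reduces to comparing the Erd\H{o}s supersaturation $|B|^{1/l^2}\approx 2^{m^3/(\eta c_1 m)^2}=2^{m/(\eta c_1)^2}$ against the pigeonhole loss $2^{c_1m}$, and it is exactly to make this comparison go through---so that the cube in $m^3$ survives the division by $l^2=\Theta(m^2)$---that the hypothesis takes $|B|=c_2 2^{m^3}$ with $m$ sufficiently large.
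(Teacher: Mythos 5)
Your proposal is correct and follows essentially the same route as the paper's own proof: Fact \ref{denseSubgraph} to find the family of $3$-sets with many $A$-neighbours, the pigeonhole of Fact \ref{subsetsPHP} over subsets of $A$ to fix $A_1$ and the subfamily $\mathcal{T}^*$, and then Erd\H{o}s' Lemma \ref{hyperKST} (with $r=3$, $l=\eta|A|$) applied to $\mathcal{T}^*$ viewed as a $3$-graph on $B$, with the same arithmetic check that the loss $2^{c_1 m}$ is negligible against $|B|^{1/l^2}\approx 2^{m/(\eta c_1)^2}$ when $m$ is large.
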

\vskip4pt

\begin{proof}
First apply Fact \ref{denseSubgraph} to get a subset of $3$-sets of vertices in $B$, $T_1\subset {B\choose 3}$, such that every $3$-set in $T_1$ makes edges with at least $\eta|A|$ vertices in $A$ and $|T_1|\geq \eta {|B|\choose 3} \geq \eta |B|^3/10$. Next we find a $T_2\subset T_1$, such that all $3$-sets in $T_2$ make edges with the same subset of $A$ (say $A_1\subset A$). By Fact \ref{subsetsPHP} we have $$|T_2|\geq \frac{\eta|B|^3/10}{2^{c_1m}} = \frac{\eta}{10} \frac{\left(c_22^{m^3}\right)^3}{2^{c_1m}} = \frac{c_2^3\eta}{10} 2^{m^3\left(3-c_1/m^2\right)} = \frac{c_2^3\eta}{10} \left(\frac{|B|}{c_2}\right)^{3-c_1/m^2} \geq |B|^{3-1/(\eta|A|)^2}$$ where the last inequality follows when $m$ is sufficiently large and $\eta$, $c_1$ and $c_2$ are small constants.\\

\noindent Now construct an auxiliary $3$-graph $H_1$ where $V(H_1)=B$ and edges of $H_1$ corresponds to $3$-sets in $T_2$. Applying lemma \ref{hyperKST} on $H_1$ (for $r=3$) we get a complete $3$-partite $3$-graph with color classes $B_1$, $B_2$ and $B_3$ each of size $\eta|A|$. Clearly $A_1$, $B_1$, $B_2$ and $B_3$ corresponds to color classes of a complete $4$-partite $4$-graph in $H$ as in the statement of the lemma. \hfill{} 
\end{proof}
\vskip4pt

\begin{lemma}\label{subsetPHP_hyper_2}
Let $H(A,B,{Z\choose 2})$ be a $4$-graph with $|A|=|B|=c_1 m$, $|Z|=c_2 2^{m^3}$ for some constants $0<c_1,c_2<1$, if $$d_4\left(A, B, {Z\choose 2}\right)\geq 2\eta$$ then there exists a complete $4$-partite $4$-graph $H'(A',B',Z_1,Z_2)$, with $A'\subset A$, $B'\subset B$ and  $Z_1 \mbox{ and } Z_2$ are disjoint subsets of $Z$ such that $|A'|=|B'|=|Z_1|=|Z_2|= c\log m$ for a constant $c=c(c_1,c_2,\eta)$. 
\end{lemma}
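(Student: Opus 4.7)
The plan is to imitate the three-step structure of Lemma \ref{subsetPHP_hyper}, with the ``first coordinate'' $A$ there replaced by the product $A\times B$. In Step 1, view $H$ as a bipartite graph between $A\times B$ and $\binom{Z}{2}$ (a pair $P=\{z_1,z_2\}$ joined to $(a,b)$ iff $\{a,b,z_1,z_2\}\in E(H)$) and apply Fact \ref{denseSubgraph}: there is a set $T_1\subset\binom{Z}{2}$ with $|T_1|\geq \eta\binom{|Z|}{2}$ such that for every $P\in T_1$ the bipartite neighborhood
$$N(P)=\bigl\{(a,b)\in A\times B \colon \{a,b\}\cup P\in E(H)\bigr\}$$
has at least $\eta|A||B|$ elements, i.e.\ $N(P)$ is an $\eta$-dense bipartite graph on $A\cup B$.

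In Step 2, for each $P\in T_1$ apply Lemma \ref{hyperKST} with $r=2$ to $N(P)$, viewed as a graph on $2c_1m$ vertices. Since $N(P)$ is $\eta$-dense, for a small enough constant $c=c(c_1,\eta)$ it contains a $K^{(2)}(s)$ with $s=c\log m$, and any such copy must split one class into $A$ and the other into $B$ since $N(P)$ is bipartite. Fix such a witness $A_P\times B_P\subset A\times B$ for each $P\in T_1$. The number of candidate pairs $(A',B')\in\binom{A}{s}\times\binom{B}{s}$ is at most $(c_1m)^{2s}$, so by pigeonhole there exist fixed $A'\subset A$ and $B'\subset B$ of size $s$ for which
$$T_2=\bigl\{P\in T_1\colon A'\times B'\subset N(P)\bigr\}$$
satisfies $|T_2|\geq \dfrac{\eta|Z|^2}{3(c_1m)^{2s}}$.

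In Step 3, form the auxiliary graph $G$ on $Z$ with edge set $T_2$ and apply Lemma \ref{hyperKST} once more, with $r=2$ and $l=s$, to obtain a $K^{(2)}(s)\subset G$ on disjoint color classes $Z_1,Z_2\subset Z$. The hypothesis $|T_2|\geq |Z|^{2-1/s}$ reduces, after substituting $|Z|=c_22^{m^3}$, to $2^{m^3/s}\gtrsim 2^{2s\log(c_1m)}$, i.e.\ $s^2\ll m^3/\log m$, which is amply satisfied when $s=c\log m$ and $m$ is large. By construction, for every choice of $a\in A'$, $b\in B'$, $z_1\in Z_1$, $z_2\in Z_2$ the pair $\{z_1,z_2\}$ lies in $T_2$, hence $A'\times B'\subset N(\{z_1,z_2\})$, and so $\{a,b,z_1,z_2\}\in E(H)$. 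This produces the desired complete $4$-partite $H'(A',B',Z_1,Z_2)$ with each part of size $s=c\log m$.

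The main obstacle is the quantitative balancing in Step 2: we must simultaneously take $s$ small enough that each $N(P)$ is guaranteed to contain a $K^{(2)}(s)$ (forcing $s=O(\log m)$) and yet keep the pigeonhole loss $(c_1m)^{2s}$ small enough that at least $|Z|^{2-1/s}$ pairs survive into $T_2$. This is exactly where the strongly lopsided hypothesis $|Z|=c_22^{m^3}$ is essential: it makes $|Z|^{1/s}$ doubly exponential in $m$ and hence able to absorb the polynomial pigeonhole cost $(c_1m)^{2s}$.
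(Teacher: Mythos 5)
Your proof is correct, and it reaches the conclusion by a route that is recognizably parallel to the paper's but differs in one genuine way. The paper's proof, after the same first step (Fact \ref{denseSubgraph} applied to the bipartite graph between $A\times B$ and ${Z\choose 2}$), pigeonholes via Fact \ref{subsetsPHP} on the \emph{entire} neighborhood in $A\times B$: it fixes one dense set $Q_1\subset A\times B$ common to many pairs of $Z$, paying a factor $2^{(c_1m)^2}$, and only at the very end applies Lemma \ref{hyperKST} (once to the auxiliary graph on $Z$, once to the bipartite graph $Q_1$) to extract $Z_1,Z_2$ and then $A',B'$, finally passing to subsets to balance the sizes. You instead apply Lemma \ref{hyperKST} inside each neighborhood $N(P)$ first, fix a complete bipartite witness $A_P\times B_P$ of size $s=c\log m$, and pigeonhole over the at most $(c_1m)^{2s}$ possible witnesses; your observation that a $K^{(2)}(s)$ in the bipartite graph $N(P)$ necessarily has one class in $A$ and one in $B$ is correct and needed. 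The quantitative trade-off favors you: your pigeonhole loss is only quasi-polynomial in $m$, so the argument would already work with $|Z|\geq 2^{C\log^3 m}$, whereas the paper's exact-neighborhood pigeonhole needs $|Z|$ exponential in $m^2$; of course with $|Z|=c_2 2^{m^3}$ both are comfortable, and the paper's version is slightly cleaner in that it invokes Lemma \ref{hyperKST} only twice and never has to select witnesses. Your verification of the threshold $|T_2|\geq|Z|^{2-1/s}$ is the right condition and your estimate $s^2\ll m^3/\log m$ settles it, so the three-part structure (density, pigeonhole, Erd\H{o}s's theorem), in your reordered form, goes through.
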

\begin{proof}
First apply Fact \ref{denseSubgraph} to get a subset of pairs of vertices in $Z$, $P_1\subset {Z\choose 2}$, such that every pair in $P_1$ makes edges with at least $\eta|A||B|$ pairs of vertices in $A\times B$ and $|P_1|\geq \eta {|Z|\choose 2} \geq \eta |Z|^2/3$. Next we find a $P_2\subset P_1$, such that all pairs in $P_2$ make edges with the same subset of pairs in $A\times B$ (say $Q_1\subset A\times B$). By Fact \ref{subsetsPHP} we have $$|P_2|\geq \frac{\eta|Z|^2/3}{2^{(c_1m)^2}} = \frac{\eta}{3} \frac{\left(c_22^{m^3}\right)^2}{2^{\left(c_1m\right)^2}} = \frac{c_2^2\eta}{3} 2^{m^3\left(2-c_1^2/m\right)} = \frac{c_2^2\eta}{3} \left(\frac{|Z|}{c_2}\right)^{2-c_1^2/m} \geq |Z|^{2-1/\left(\log m\right)}$$ where the last inequality follows when $m$ is sufficiently large, and $\eta$, $c_1$ and $c_2$ are small constants.
\vskip6pt

Now construct an auxiliary $2$-graph $G_1$ where $V(G_1)=Z$ and edges of $G_1$ corresponds to pairs in $P_2$. Applying lemma \ref{hyperKST} on $G_1$ (for $r=2$) we get a complete bipartite graph with color classes $Z_1$ and $Z_2$ each of size $\log m$.
\vskip6pt
Similarly construct an auxiliary bipartite graph $G_2$ with color classes $A$ and $B$, and edges of $G_2$ corresponds to pairs in $Q_1$. Since we have $|Q_1|\geq \eta|A||B|$ applying lemma \ref{hyperKST} on $G_2$ (for $r=2$) we get a complete bipartite graph with color classes $A'$ and $B'$, each of size $c\log m$. Clearly $A'$, $B'$ and a subset of vertices each from $Z_1$ and $Z_2$ of size $|A'|$ corresponds to color classes of the balanced complete $4$-partite $4$-graph in $H$ as in the statement of the lemma.\hfill{} 
\end{proof}
\vskip 3pt 
\begin{lemma}\label{4partVolArg}
Let $H(A,B,C,Z)$ be a $4$-partite $4$-graph with $|A| = |B| = |C| =c_1m$ and $|Z| = c_2 2^{m^3}$ for some constants $0<c_1,c_2<1$. If $$d_4(Z,(A\times B\times C)) \geq 2\eta$$ then there exists a complete $4$-partite $4$-graph $H'(A',B',C',Z')$ such that $|A'|=|B'| =|C'| =|Z'|= \beta \sqrt{\log |A|}$. 
\end{lemma}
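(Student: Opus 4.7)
My plan is to mirror the strategy used for Lemmas \ref{subsetPHP_hyper} and \ref{subsetPHP_hyper_2}: view the 4-partite hypergraph $H(A,B,C,Z)$ as a bipartite incidence structure between $Z$ and the product set $A\times B\times C$, concentrate the density on a common ``triple-link'' in $A\times B\times C$, and finally invoke Lemma \ref{hyperKST} on that link.

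Concretely, I would first regard $H$ as a bipartite graph with parts $Z$ and $A\times B\times C$, where $z\in Z$ is joined to $(a,b,c)$ precisely when $\{z,a,b,c\}\in E(H)$; the hypothesis makes this bipartite graph $2\eta$-dense. Fact \ref{denseSubgraph} then supplies a set $Z_1\subseteq Z$ with $|Z_1|\geq \eta|Z|$ such that every $z\in Z_1$ has triple-degree at least $\eta|A||B||C|$ into $A\times B\times C$. The subset-counting pigeonhole idea behind Fact \ref{subsetsPHP} applies next: there are at most $2^{|A||B||C|}=2^{c_1^3 m^3}$ possible triple-neighborhoods, whereas $|Z_1|\geq \eta c_2 2^{m^3}$, so some common neighborhood $T\subseteq A\times B\times C$ of size at least $\eta|A||B||C|$ is the triple-link of a subset $Z^{*}\subseteq Z_1$ of size at least $\eta c_2\, 2^{m^3(1-c_1^3)}$.

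Now $T$ is a 3-partite 3-graph on $A\cup B\cup C$ (of $3c_1 m$ vertices) whose density as an ordinary 3-graph is a positive constant depending only on $\eta$ and $c_1$. The consequence of Lemma \ref{hyperKST} recorded just after its statement therefore produces a copy of $K^{(3)}(l)$ with $l=\beta\sqrt{\log(c_1 m)}$ inside $T$. Because $T$ is 3-partite with respect to $A,B,C$, each color class of this $K^{(3)}(l)$ must lie in a single one of $A$, $B$, $C$ (otherwise two vertices in a color class, together with a suitable vertex of another color class, would form a ``forced'' edge with two vertices in the same host part). This gives sets $A'\subseteq A$, $B'\subseteq B$, $C'\subseteq C$ of size $l$ whose full product lies in $T$. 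Since $|Z^{*}|$ is doubly exponential in $m$, I can pick an arbitrary $Z'\subseteq Z^{*}$ of size $l$, and then $(A',B',C',Z')$ is the required complete balanced 4-partite 4-graph.

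The only step that needs attention is the quantitative one: I need to confirm that after the pigeonhole we still have $|Z^{*}|\geq l=\beta\sqrt{\log|A|}$, and that the surviving 3-graph on $T$ has density above the constant threshold required by Lemma \ref{hyperKST} for the target $l$. Both are routine and parallel the exponential estimates written out in the proofs of Lemmas \ref{subsetPHP_hyper} and \ref{subsetPHP_hyper_2}; they simply force $\beta$ to be a sufficiently small constant depending on $c_1$, $c_2$, and $\eta$.
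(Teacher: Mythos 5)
Your proposal is correct and follows essentially the same route as the paper: Fact \ref{denseSubgraph} to pass to a large $Z_1$ of vertices with high triple-degree, the neighborhood-pigeonhole of Fact \ref{subsetsPHP} applied to the bipartite incidence graph between $Z_1$ and $A\times B\times C$ to extract a huge set $Z^{*}$ with a common link $D'$ of size at least $\eta|A|^3$, and then Lemma \ref{hyperKST} (case $r=3$) on the $3$-partite $3$-graph formed by $D'$ to obtain $A'$, $B'$, $C'$ of size $\beta\sqrt{\log|A|}$, completed by an arbitrary subset of $Z^{*}$ of the same size. The quantitative checks you defer are exactly the ones the paper carries out, so there is no gap.
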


\vskip4pt

\begin{proof}
 First apply Fact \ref{denseSubgraph} on $H$ to get a subset $Z_1$ of $Z$ such that for every vertex $z\in Z_1$, $deg_3(z,(A\times B\times C))\geq \eta |A|^3$ and $|Z_1|\geq \eta |Z|$. Now consider the auxiliary bipartite graph $G(D,Z_1)$, where $D =A\times B\times C$ and a vertex $z\in Z_1$ is connected to a $3$-set $(a,b,c)\in D$ if $(a,b,c,z)$ makes an edge of $H$. An application of Fact \ref{subsetsPHP} on $G$ gives a complete bipartite graph $G_2(D',Z'_1)$ where $$|D'|\geq \eta |A|^3 \;\;\;\mbox{    and    }\;\;\; Z'_1 \geq \eta c_2 2^{m^3(1-c_1^3)} > |A|$$ when $m$ is sufficiently large. 
\vskip6pt
\noindent Let $G_3$ be a $3$-partite $3$-graph where the color classes are $A$, $B$  and $C$ and edges correspond to $3$-sets in $D'$. Since $|D'|\geq \eta|A|^3$ applying Lemma \ref{hyperKST} on $G_3$ ($r=3$), we get a balanced complete $3$-partite $3$-graph in $G_3$ with color classes $A'$, $B'$ and $C'$, such that $|A'|=|B'|=|C'| \geq \beta\sqrt{\log |A|}$. Clearly $A'$, $B'$, $C'$ and a subset of $Z'_1$ of size $|A'|$, correspond to the color classes of the required complete $4$-partite $4$-graph.\hfill{} 
\end{proof}

\vskip4pt
We will frequently apply the following folklore statements (proofs are omitted):
\begin{lemma}\label{folklore_min_degree_subgraph}
 Every graph $H$ has a subgraph $H'$ such that $\delta_1(H') \geq |E(H)|/|V(H)|$.
\end{lemma}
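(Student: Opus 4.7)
The plan is to use the standard greedy vertex-deletion argument. Set $d := |E(H)|/|V(H)|$ and iteratively delete any vertex of degree strictly less than $d$, continuing until either (a) the resulting graph has minimum degree at least $d$, in which case we output it as $H'$, or (b) no vertices remain.

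The key is to show that case (b) cannot occur, which I would establish by tracking the edge count. Each deletion of a vertex $v$ removes $\deg(v) < d$ edges. So after deleting $k$ vertices in sequence, the number of remaining edges is strictly greater than $|E(H)| - kd$. If the process terminated at the empty graph, we would have deleted all $|V(H)|$ vertices, leaving
$$0 = |E(H_{|V(H)|})| > |E(H)| - |V(H)|\,d = |E(H)| - |V(H)|\cdot \frac{|E(H)|}{|V(H)|} = 0,$$
a contradiction. Hence the process halts at some nonempty subgraph $H'$, and by the stopping condition every vertex of $H'$ has degree at least $d$ in $H'$.

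An equivalent and perhaps cleaner way to package this is via the invariant $|E(H_i)|/|V(H_i)| \geq d$: initially this ratio equals $d$, and whenever we remove a vertex of degree less than $d$, the ratio strictly increases (since $(|E|-\deg(v))/(|V|-1) > (|E|-d)/(|V|-1) \geq d$ using the invariant at step $i{-}1$). This both confirms termination with a nonempty graph and pinpoints where the inequality is used.

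There is no real obstacle here, this being a textbook argument; the only subtlety to keep straight is that the $d$ appearing in the minimum-degree conclusion is $|E(H)|/|V(H)|$ (half the average degree), not the average degree itself, so the comparison $\deg(v) < d$ must be strict throughout the deletion loop for the counting argument to close.
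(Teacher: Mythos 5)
Your argument is correct: the greedy deletion of vertices of degree less than $d=|E(H)|/|V(H)|$, combined with the edge-counting (or ratio-invariant) observation that the process cannot exhaust all vertices, is exactly the standard folklore proof, and the paper itself states this lemma with the proof omitted, so there is nothing different to compare against. The only trivial point left implicit is the degenerate case $d=0$ (or a graph on one vertex), where $H'=H$ works immediately, and your invariant formulation already covers this.
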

\begin{lemma}\label{folklore_matching}
 Every $3$-graph $H$ on $n$ vertices, with $\delta_1(H) \geq \eta {n\choose 3}/n$, has a matching of size $\eta n/24$. 
\end{lemma}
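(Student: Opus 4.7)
The plan is to prove this by a straightforward greedy argument. Starting from $M=\emptyset$ and the vertex set $V=V(H)$, I repeatedly pick any edge $e$ from the residual hypergraph $H|_{V\setminus V(M)}$, add $e$ to $M$, and iterate until the residual has no edge. Each step removes exactly three vertices, so the output is automatically a matching; it suffices to argue that the process can be continued for at least $\eta n/24$ steps.

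The key estimate is a lower bound on the minimum degree of the residual hypergraph. After $k$ greedy steps have been completed, let $W$ denote the set of the $3k$ vertices already used. For any $v\in V\setminus W$, the number of edges of $H$ containing $v$ together with at least one vertex of $W$ is at most $|W|(n-2)=3k(n-2)$, since in a $3$-graph the codegree of any pair is at most $n-2$. Therefore
$$\deg_{H|_{V\setminus W}}(v)\;\geq\;\delta_1(H)-3k(n-2)\;\geq\;\frac{\eta(n-1)(n-2)}{6}-3k(n-2)\;=\;(n-2)\left[\frac{\eta(n-1)}{6}-3k\right].$$
If $k<\eta n/24$, then $3k<\eta n/8<\eta(n-1)/6$ (for $n\geq 5$), so this quantity is strictly positive. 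In particular every remaining vertex still lies in some residual edge, and the greedy process can be extended. Hence the procedure runs for at least $\lceil\eta n/24\rceil\geq\eta n/24$ iterations, producing a matching of that size.

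Since the statement is labelled folklore, there is no genuine obstacle. The only care needed is the uniform upper bound $3k(n-2)$ on the number of edges through $v$ that meet the removed set $W$, for which the trivial codegree bound is more than enough; even though this slightly overcounts edges $\{v,u,w\}$ with both $u,w\in W$, the resulting slack is harmless for the required $\eta n/24$ threshold.
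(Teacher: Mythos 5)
Your greedy argument is correct: the codegree bound $3k(n-2)$ on edges through a remaining vertex that meet the removed set, combined with $\delta_1(H)\geq \eta(n-1)(n-2)/6$, does show the residual hypergraph keeps positive minimum degree as long as fewer than $\eta n/24$ edges have been taken, so the process yields a matching of the required size (for $n\geq 5$, which is all that matters here). The paper explicitly omits the proof of this folklore lemma, and your argument is precisely the standard one it relies on, so there is nothing further to reconcile.
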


\vskip4pt

\noindent Finally we use the following lemma from \cite{HPS_PM_3u_vertDeg}.
\vskip 3pt
\begin{lemma}\label{absorbLemma}(Absorbing Lemma) For every $\eta>0$, there is an integer $n_0 = n_0(\eta)$ such that if $H$ is a $4$-graph on  $n\geq n_0$ vertices with $\delta_1(H)\geq \left(1/2+2\eta\right){n\choose 3}$, then there exist a matching $M$ in $H$ of size $|M|\leq \eta^4n/4$ such that for every set $W\subset V\setminus V(M)$ of size at size at most $\eta^8 n\geq |W|\in 4\mathbb{Z}$, there exists a matching covering exactly the vertices in $V(M)\cup W$.
\end{lemma}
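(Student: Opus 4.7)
My plan follows the now-standard \emph{absorbing method} of R\"odl, Ruci\'nski and Szemer\'edi, adapted to the $4$-uniform setting in the same spirit as the $3$-uniform proof in \cite{HPS_PM_3u_vertDeg}. Throughout, fix $\eta>0$ and assume $\delta_1(H)\geq (1/2+2\eta)\binom{n}{3}$.

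\textbf{Step 1 (Counting absorbers).} For a $4$-set $T=\{v_1,v_2,v_3,v_4\}\subset V$, I call an ordered $8$-tuple $(x_1,\ldots,x_8)$ of distinct vertices of $V\setminus T$ an \emph{absorber for $T$} if all five $4$-sets
$$
\{x_1,x_2,x_3,x_4\},\ \{x_5,x_6,x_7,x_8\},\ \{v_1,x_1,x_2,x_3\},\ \{v_2,x_4,x_5,x_6\},\ \{v_3,v_4,x_7,x_8\}
$$
are edges of $H$. The first two edges form a $2$-matching on $A=\{x_1,\ldots,x_8\}$, and the last three form a $3$-matching on $T\cup A$, so if $A$'s two edges lie in a matching $M$ we may \emph{swap} them for the three edges and thereby ``absorb'' $T$. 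I would show that every $T$ has at least $\gamma n^8$ absorbers for some $\gamma=\gamma(\eta)>0$, by iterated use of the minimum-degree hypothesis: at each step one picks a new vertex from the positive-density neighbourhood implied by $\delta_1(H)\geq (1/2+2\eta)\binom{n}{3}$, combining vertex-degree bounds via inclusion--exclusion (two neighbourhoods of density $\geq 1/2+2\eta$ in $\binom{V}{3}$ overlap on a fraction $\geq 4\eta$) to extend partial absorbers into full ones.

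\textbf{Step 2 (Random selection of an absorbing matching).} Let $\mathcal{F}$ be the family of (unordered) $8$-sets $A$ that occur as the vertex set of some absorber, equipped with the designated $2$-matching on $A$. Select each $A\in\mathcal{F}$ independently with probability $p=c\eta^4 n^{-7}$ for a small $c>0$. Then $\mathbb{E}|\mathcal{F}'|\leq \eta^4 n/16$, the expected number of pairs in $\mathcal{F}'$ sharing a vertex is $O(p^2 n^{15})=o(\eta^4 n)$, and for every $T\in\binom{V}{4}$ we have $\mathbb{E}|\mathcal{F}'\cap\mathcal{L}(T)|\geq p\gamma n^8 = \Omega(\eta^4 n)$, where $\mathcal{L}(T)$ is the set of absorbers for $T$. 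A routine application of Markov's inequality (plus Chebyshev for the per-$T$ lower tail, using that the variance is $O(\mathbb{E})$) shows that with positive probability all three estimates hold simultaneously. Deleting one member of each intersecting pair yields pairwise vertex-disjoint absorbers whose associated $2$-matchings together form the desired matching $M$ with $|M|\leq \eta^4 n/4$, and each $T$ still has at least $2\eta^8 n$ surviving absorbers contained in $M$.

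\textbf{Step 3 (Absorption).} Given any $W\subset V\setminus V(M)$ with $|W|\leq \eta^8 n$ and $|W|\in 4\mathbb{Z}$, partition $W$ arbitrarily into $4$-sets $T_1,\ldots,T_k$ with $k=|W|/4\leq \eta^8 n/4$. Process the $T_i$ greedily: at step $i$ we have consumed at most $k$ absorbers of $M$ and $T_i$ originally had at least $2\eta^8 n>k$ absorbers in $M$, so an unused absorber $A_i$ for $T_i$ still exists. Perform the swap: remove the two edges of $M$ on $A_i$ and add the three edges covering $A_i\cup T_i$. After $k$ swaps the resulting matching covers exactly $V(M)\cup W$.

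\textbf{Main obstacle.} The delicate step is Step~1, because $\delta_1$ bounds only vertex degrees and not pair- or triple-codegrees, whereas the five overlapping edges of an absorber constrain such codegrees (e.g.\ the pair $\{v_3,v_4\}$ in $\{v_3,v_4,x_7,x_8\}$). I expect to handle this by first showing, via averaging, that all but $o(n^2)$ pairs $\{u,v\}$ have codegree $\Omega(n^2)$, and then performing every extension of a partial absorber only through such ``good'' pairs; this loses only a negligible fraction in the $\gamma n^8$ bound and keeps all five edges findable simultaneously.
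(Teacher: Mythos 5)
The paper does not prove this lemma at all -- it is quoted verbatim from H\`an--Person--Schacht \cite{HPS_PM_3u_vertDeg} -- so the only question is whether your sketch would constitute a valid proof, and it would not: Step~1 has a genuine gap. Your absorber uses the edge $\{v_3,v_4,x_7,x_8\}$, which contains \emph{two} vertices of $T$, so its existence is governed by the codegree of the pair $\{v_3,v_4\}$, and a minimum \emph{vertex} degree of $(1/2+2\eta){n\choose 3}$ gives no control whatsoever over pair codegrees. Your proposed patch (route every extension through ``good'' pairs of codegree $\Omega(n^2)$) cannot work, for two reasons. First, the claim that all but $o(n^2)$ pairs are good is false under the hypothesis: take $S\subset V$ with $|S|=0.2n$ and let $E(H)$ be all $4$-sets meeting $S$ in at most one vertex; then every vertex has degree at least $\binom{0.8n}{3}\approx 0.512\binom{n}{3}>(1/2+2\eta)\binom{n}{3}$ for small $\eta$, yet all $\Theta(n^2)$ pairs inside $S$ have codegree $0$. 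Second, and decisively, the problematic pair lies \emph{inside} $T$, and $T$ comes from the adversarial leftover set $W$: in the example above $W$ may consist entirely of vertices of $S$, and then every $4$-set of $W$, under every way of partitioning $W$, contains a pair of codegree $0$, so it has no absorbers of your shape at all. No choice of the random family in Step~2 can repair this; the absorber itself must be redesigned.

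The standard fix -- and the one used in \cite{HPS_PM_3u_vertDeg}, whose proof this lemma imports -- is to make every edge of the absorbing configuration meet $T$ in at most one vertex. For $T=\{v_1,\ldots,v_4\}$ one takes an absorbing set $A$ of $16$ vertices: an edge $\{u_1,u_2,u_3,u_4\}\in E(H)$ together with, for each $i$, a $3$-set $S_i$ lying in the common link of $u_i$ and $v_i$, i.e.\ with both $S_i\cup\{u_i\}$ and $S_i\cup\{v_i\}$ edges. Here is exactly where the threshold $1/2$ enters: by inclusion--exclusion, two vertices whose links each have density at least $1/2+2\eta$ in ${V\choose 3}$ have at least $4\eta{n\choose 3}-O(n^2)$ common link $3$-sets, so each $S_i$ can be chosen in $\Omega(\eta n^3)$ ways and every $T$ has $\Omega(\eta^5 n^{16})$ absorbers -- with no codegree assumption ever invoked. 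Then $H[A]$ has the perfect matching $\{S_i\cup\{u_i\}\}_{i\le 4}$ (these four edges go into $M$) and $H[A\cup T]$ has the perfect matching $\{u_1,\ldots,u_4\}\cup\{S_i\cup\{v_i\}\}_{i\le 4}$, which is the swap used in absorption. Your Steps~2 and~3 (random selection via Markov/Chebyshev, deletion of intersecting pairs, greedy absorption of $W$ in $4$-sets) are the standard machinery and go through essentially unchanged once the absorber is replaced by this $16$-vertex structure; only the exponents in the selection probability need adjusting.
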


\section{Auxiliary results}\label{aux_results} 
For a $4\times4\times4$ $3$-graph, denote by $Q_1,Q_2,Q_3$ its $3$ color classes and let $Q_1 = \{a_1,a_2,a_3,a_4\}$, $Q_2 = \{b_1,b_2,b_3,b_4\}$ and $Q_3 = \{c_1,c_2,c_3,c_4\}$. For two vertices, $x\in Q_i,\; y\in Q_j$,\; $i\neq j$, $N(x,y)$ denotes the neighborhood of the pair $x,y$, i.e. the set of vertices in the third color class that make edges with the pair $x,y$. Let $deg(x,y)=|N(x,y)|$. For two disjoint pairs $(x_1,y_1)$ and $(x_2,y_2)$ in $Q_i\times Q_j$, $i\neq j$ the pairs $(x_1,y_2)$ and $(x_2,y_1)$ are called the crossing pairs and the value  $deg(x_1,y_2) + deg(x_2,y_1)$ is referred to as the {\em crossing degree sum}. We define the following four special $4\times4\times4$ $3$-graphs.
\vskip5pt
\begin{definition}
$H_{432}$ is a $4\times4\times4$ 3-graph, such that there exist $3$ disjoint pairs in a $Q_i\times Q_j$, $i\neq j$, with degrees at least $4$, $3$ and $2$ respectively.  
\end{definition}
In the figures the triplets joined by a line represent an edge in the $3$-graph. Let the $H_{432}$ be as in Figure \ref{fig_H_{432_1}} or Figure \ref{fig_H_{432_2}}. If the $H_{432}$ does not have a perfect matching then we must have $deg(b_4,c_4) = 0$, (as otherwise by the K\"onig-Hall criteria we get a perfect matching).
\begin{figure}[h!]\label{fig_H_432}
\begin{minipage}[b]{0.3\linewidth}
\centering
\subfigure[\footnotesize{$H_{432}$} with $|N(b_2,c_2) \cup N(b_3,c_3)| = 3$]{\label{fig_H_{432_1}}\includegraphics[width=2.54in]{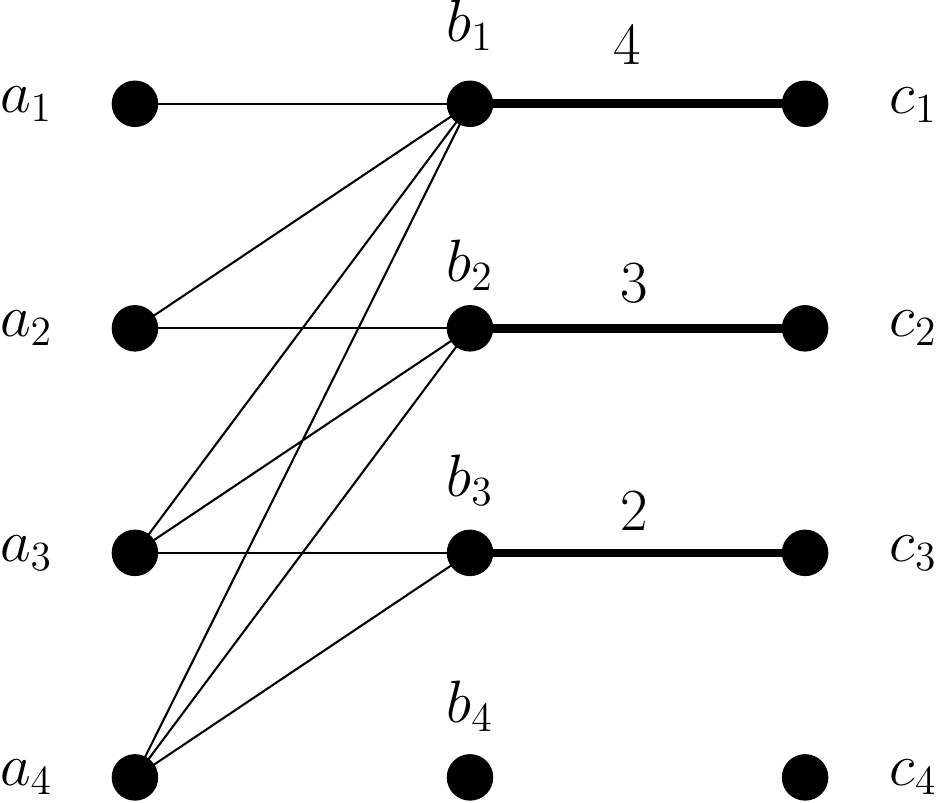}}
\end{minipage}
\hspace{3cm}
\begin{minipage}[b]{0.55\linewidth}
\centering
\subfigure[\footnotesize{$H_{432}$} with $|N(b_2,c_2) \cup N(b_3,c_3)| = 4$]{\label{fig_H_{432_2}}\includegraphics[width=2.54in]{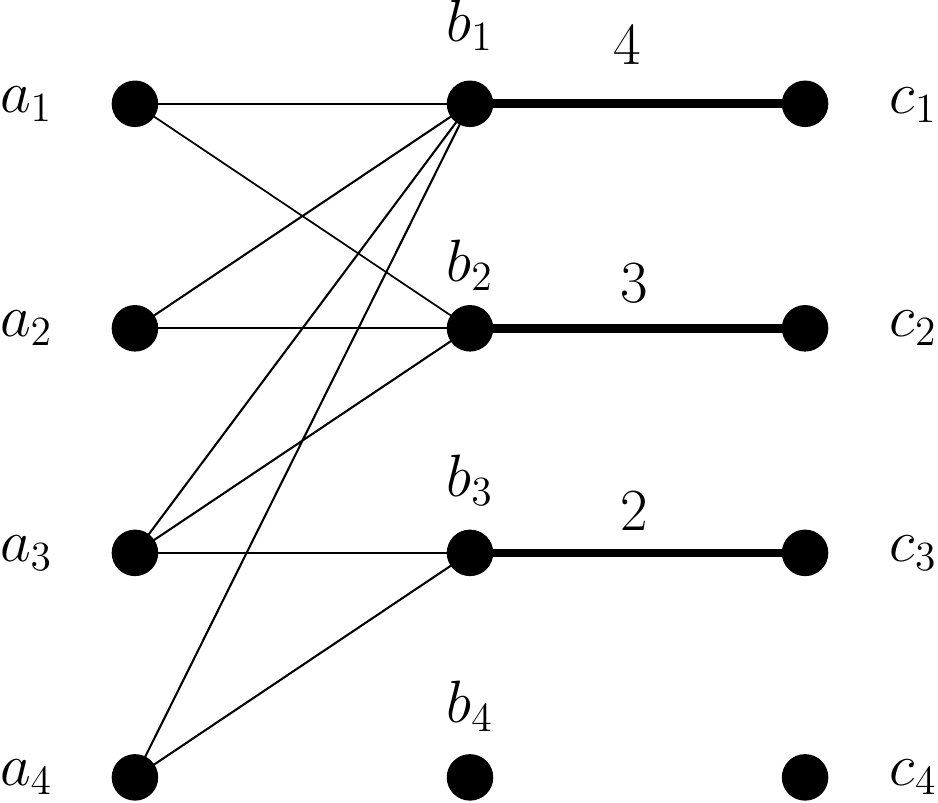}}
\end{minipage}
\caption{The two types of $H_{432}$. Labels on edges is the degree of the pair.}
\end{figure}
\begin{definition}
$H_{4221}$ is a $4\times4\times4$ tripartite 3-graph, such that there exist $4$ disjoint pairs in a $Q_i\times Q_j$, $i\neq j$, with degrees at least $4$, $2$, $2$ and $1$ respectively. 
\end{definition}
Let the $H_{4221}$ be as in Figure \ref{fig_H_{4221_1}}. If $|N(b_2,c_2) \cup N(b_3,c_3) \cup N(b_4,c_4)| = 3$ then again by the K\"onig-Hall criteria we get that $H_{4221}$ has a perfect matching. We only consider the $H_{4221}$ that has no perfect matching.

\begin{definition}
$H_{3321}$ is a $4\times4\times4$ 3-graph, such that there exist $4$ disjoint pairs in a $Q_i\times Q_j$, $i\neq j$, with degrees at least $3$, $3$, $2$ and $1$ respectively (see Figure \ref{fig_H_3221_1}).
\end{definition}

\begin{figure}[h!]\label{fig_H_4221_n_H_3321}
\begin{minipage}[b]{0.3\linewidth}
\centering
\subfigure[\footnotesize{$H_{4221}$ with no perfect matching}]{\label{fig_H_{4221_1}}\includegraphics[width=2.54in]{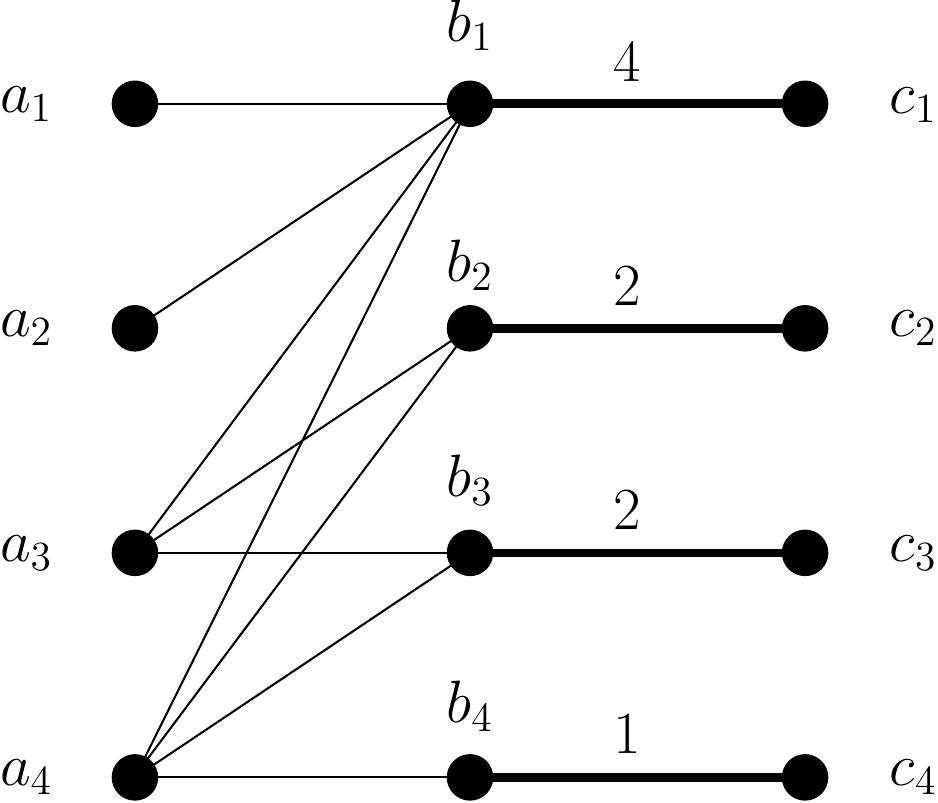}}
\end{minipage}
\hspace{3cm}
\begin{minipage}[b]{0.55\linewidth}
\centering
\subfigure[\footnotesize{$H_{3321}$ with no perfect matching}]{\label{fig_H_3221_1}\includegraphics[width=2.54in]{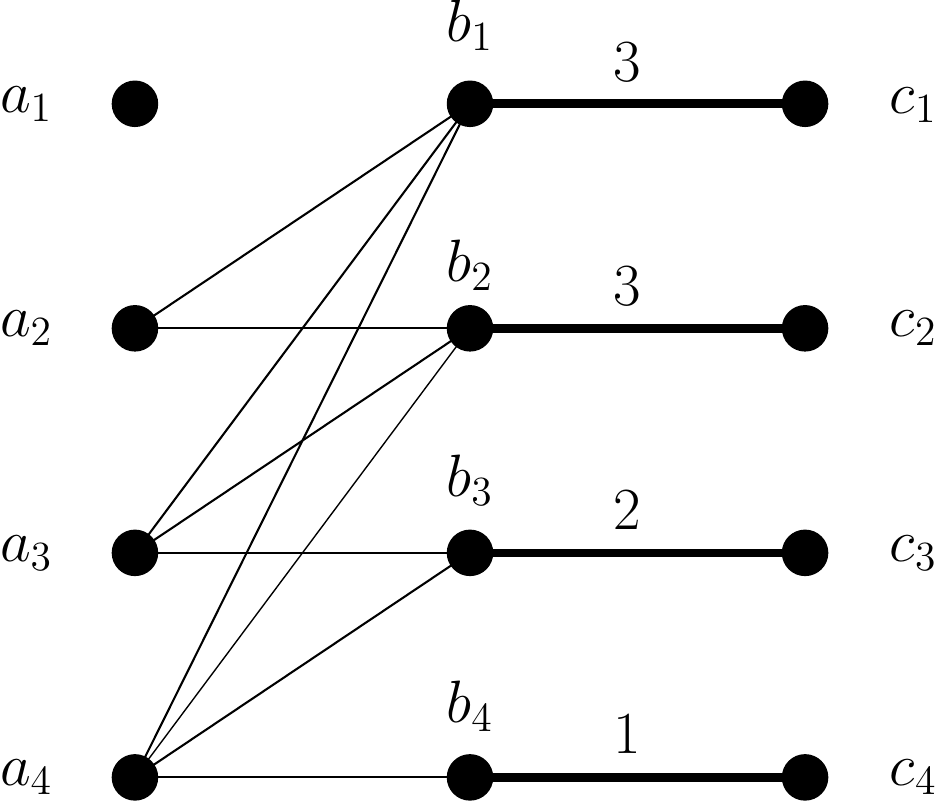}}
\end{minipage}
\end{figure}

\begin{definition}
$H_{ext}$ is a $4\times4\times4$ 3-graph with exactly $37$ edges such that there are  three vertices, one in each of $Q_1,Q_2$ and $Q_3$, and all edges are incident to at least one of these three vertices.
\end{definition}
\vskip6pt
The following lemma that classifies, $4\times4\times4$ $3$-graphs with at least $37$ edges, will be very useful in the subsequent section.  

\begin{lemma}\label{ext_categories}
Let $H(Q_1,Q_2,Q_3)$ be a $4\times4\times4$ $3$-graph. If $|E(H)|\geq 37$ then one of the following must be true
\begin{enumerate}
\item $H$ has a perfect matching.
\item $H$ has a subgraph isomorphic to $H_{3321}$
\item $H$ has a subgraph isomorphic to $H_{432}$
\item $H$ has a subgraph isomorphic to $H_{4221}$
\item $H$ is isomorphic to $H_{ext}$.
 
\end{enumerate}

\end{lemma}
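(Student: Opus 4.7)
The plan is to suppose that $H$ has no perfect matching and contains none of $H_{432}, H_{4221}, H_{3321}$ as a subgraph, and to prove $H \cong H_{ext}$. For each pair of color classes $(Q_i, Q_j)$ consider the $4 \times 4$ pair-degree matrix $D_{ij}$ whose entry at $(u, v) \in Q_i \times Q_j$ is $\deg(u, v) = |\{w \in Q_k : uvw \in E(H)\}|$ (where $k$ is the third index). Then $\sum D_{ij} = |E(H)| \geq 37$ with each entry in $\{0, \ldots, 4\}$, and the three forbidden-subgraph hypotheses impose the following constraints on every $D_{ij}$: (C1) no three entries in distinct rows and columns have sorted degrees $\geq (4, 3, 2)$; (C2) no row-column permutation of four entries has sorted degrees $\geq (4, 2, 2, 1)$; (C3) no such permutation has sorted degrees $\geq (3, 3, 2, 1)$.

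First I would show $\Delta := \max D_{23} = 4$. The case $\Delta \leq 2$ gives $\sum D_{23} \leq 32 < 37$. For $\Delta = 3$, constraint (C3) severely restricts the placement of degree-$3$ entries: two disjoint degree-$3$ entries force the complementary $2 \times 2$ submatrix to have sum $\leq 6$, and three pairwise disjoint ones force the fourth completing entry of their permutation to be $0$. A careful case-split on the set of degree-$3$ entries (all in one row/column, two disjoint, three disjoint, etc.) then bounds $\sum D_{23} \leq 36$.

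Next, fix $(b_1, c_1)$ with $\deg(b_1, c_1) = 4$ and let $D' = D_{23}|_{\{b_2,b_3,b_4\} \times \{c_2,c_3,c_4\}}$. Using constraints (C1)--(C3) on $D_{23}$ together with the analogous constraints on $D_{12}$ and $D_{13}$, I would show that $|E(H)| \geq 37$ forces $\max D' \leq 1$, row $b_1$ and column $c_1$ of $D_{23}$ identically equal to $4$, and every entry of $D'$ equal to $1$. The identity
\[ |E(H)| = \deg(b_1, c_1) + \sum_{j \geq 2}\deg(b_1, c_j) + \sum_{i \geq 2}\deg(b_i, c_1) + \sum D' \leq 4 + 12 + 12 + 9 = 37 \]
holds automatically once $\max D' \leq 1$, and equality forces the claimed structure. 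The ``bad'' case where some entry of $D'$ has degree $\geq 2$ is ruled out by tracing how such a high-degree entry in $D_{23}$, combined with the identification of its witness vertices in $Q_1$, produces three disjoint entries in $D_{12}$ or $D_{13}$ with sorted degrees violating (C1), (C2), or (C3).

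In the equality structure each pair $(b_i, c_j)$ with $i, j \geq 2$ has a unique witness $\phi(i, j) \in Q_1$ with $(\phi(i, j), b_i, c_j) \in E(H)$, while every pair $(b_1, c_j)$ or $(b_i, c_1)$ has all four vertices of $Q_1$ as witnesses. Applying the same analysis to $D_{12}$, whose column $b_1$ is already identically $4$ and whose sum is $37$, yields a vertex $a_* \in Q_1$ whose row in $D_{12}$ is identically $4$. Since $\deg(a_*, b_i) = 1 + |\{j \geq 2 : \phi(i, j) = a_*\}|$, the equality $\deg(a_*, b_i) = 4$ for $i \geq 2$ forces $\phi(i, j) = a_*$ for all $i, j \geq 2$. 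Thus every edge of $H$ meets the three-vertex set $\{a_*, b_1, c_1\}$, giving $H \cong H_{ext}$. The main obstacle will be the subcase analysis ruling out $\max D' \geq 2$, which requires combining constraints across all three pair-degree matrices and tracking how high-degree entries propagate witness structure from $Q_1$ into $D_{12}$ and $D_{13}$.
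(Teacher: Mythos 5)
Your overall frame --- the pair-degree matrices $D_{ij}$, the three forbidden transversal patterns (C1)--(C3), the counting identity, and the final witness-function argument passing from the structure of $D_{23}$ to an all-$4$ row of $D_{12}$ and hence to $H_{ext}$ --- is a reasonable repackaging of the paper's argument, and your first step (if no pair has degree $4$ then the constraints force $|E(H)|\leq 36$) is essentially the paper's Case~1 in contrapositive form. The genuine gap is the middle step. The claim ``fix $(b_1,c_1)$ with $\deg(b_1,c_1)=4$; then $|E(H)|\geq 37$ and (C1)--(C3) force $\max D'\leq 1$ and force row $b_1$ and column $c_1$ of $D_{23}$ to be identically $4$'' is false for an arbitrary degree-$4$ entry. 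Take $H=H_{ext}$ itself, with special vertices $a^{*}\in Q_1$, $b^{*}\in Q_2$, $c^{*}\in Q_3$: it satisfies every hypothesis you are allowed to use ($37$ edges, no perfect matching, none of $H_{432}$, $H_{4221}$, $H_{3321}$), and in $D_{23}$ the row of $b^{*}$ and the column of $c^{*}$ are identically $4$ while all other entries are $1$. Now choose $(b_1,c_1)=(b^{*},c)$ with $c\neq c^{*}$; this pair has degree $4$, yet the complementary $3\times 3$ matrix $D'$ contains the three entries $(b,c^{*})$, $b\neq b^{*}$, each equal to $4$, so $\max D'=4$. Consequently your proposed mechanism for the ``bad case'' --- deriving a violation of (C1)--(C3) in $D_{12}$ or $D_{13}$ from an entry of $D'$ of degree at least $2$ --- cannot succeed: that configuration is perfectly consistent, since it occurs inside $H_{ext}$.

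What you actually need is that some suitably chosen degree-$4$ pair is the crossing point of an all-$4$ row and an all-$4$ column, i.e., that the degree-$4$ entries organize themselves into a full row plus a full column with every remaining entry equal to $1$. Establishing this is the hard content of the lemma; the paper does it in its Cases~2 and~3 through an extremal selection of two disjoint degree-$4$ pairs ($a_1$ of maximum vertex degree among vertices lying in degree-$4$ pairs, and $\deg(a_2,b_1)$ as small as possible) followed by the subcases $x=\deg(a_2,b_1)\in\{4,3,2,1\}$, the last of which is where $H_{ext}$ emerges. Your sketch offers no substitute for that analysis, so the proof as proposed does not go through. Your concluding step, by contrast, is sound once the middle step is secured, and differs mildly from the paper: the paper invokes the no-perfect-matching hypothesis to force all degree-$1$ pairs to share a single common neighbor, whereas you rerun the structural analysis on $D_{12}$; either route works at that point.
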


\begin{proof}

We consider the following cases based on degree of pairs in $Q_i\times Q_j$. 

\vskip8pt\noindent \textbf{Case 1:} There is $Q_i$ and $Q_j$, $i\neq j$ such that no pair in $Q_i\times Q_j$ has degree $4$.
 
\vskip10pt\noindent Let $Q_2, Q_3$ be such a pair. Since $|E(H)|\geq 37$ and no pair has degree $4$, at least $5$ out of the $16$ pairs in $Q_2\times Q_3$ must be of degree $3$. Which implies that there must be at least $2$ disjoint pairs of degree $3$. Consider the largest set of disjoint pairs of degree $3$ in $Q_2\times Q_3$. Assume that there are $3$ disjoint pairs in $Q_2\times Q_3$ of degree $3$, say $(b_1,c_1)$, $(b_2,c_2)$ and $(b_3,c_3)$. 
\vskip6pt

If $deg(b_4,c_4)\geq 1$, then we have a $H_{3321}$. So assume that $deg(b_4,c_4)=0$, The total number of edges incident to pairs in $\{b_1,b_2,b_3\}\times\{c_1,c_2,c_3\}$ is at most $27$ (as there are $9$ pairs and degree of every pair is at most $3$), the remaining at least $10$ edges are incident to either $b_4$ or $c_4$. Which implies that there must be at least one pair (say $(b_3,c_3)$), such that crossing degree sum of $(b_3,c_3)$ and $(b_4,c_4)$ is at least $4$. Therefore we have that the degree of one crossing pair is at least $2$ and that of the other crossing pair is at least $1$. These two crossing pairs together with $(b_1,c_1)$ and $(b_2,c_2)$ gives us a subgraph isomorphic to $H_{3321}$. 
\vskip10pt

On the other hand, if there are exactly $2$ disjoint pairs of degree $3$, say $(b_1,c_1)$ and $(b_2,c_2)$. Again the total number of edges incident to pairs in $\{b_1,b_2\}\times\{c_1,c_2\}$ is at most $12$ (as there are $4$ pairs and degree of every pair is at most $3$). If there is a pair in $\{b_3,b_4\}\times\{c_3,c_4\}$ (say $(b_3,c_3)$) such that the {\em crossing degree sum} of $(b_1,c_1)$ and $(b_3,c_3)$ is at least $6$, then since there is no degree $4$ pair we must have that both $deg(b_1,c_3)$ and $deg(c_1,b_3)$ are $3$. Now these crossing pairs together with $(a_2,b_2)$ are $3$ disjoint pairs of degree $3$ which is a contradiction to the maximality of the set of disjoint pairs of degree $3$. Therefore we must have that the sum of degrees of pairs in $\{b_1,b_2\}\times \{c_3,c_4\}$ and $\{c_1,c_2\}\times \{b_3,b_4\}$ is at most $4\times 5 = 20$. Hence the number of edges of $H$ incident to pairs in $\{b_3,b_4\} \times \{c_3,c_4\}$ is at least $37-12-20 = 5$ and no pair has degree $3$. Therefore, in $\{b_3,b_4\}\times\{c_3,c_4\}$, we can find two disjoint disjoint pairs, (say $(b_3,c_3)$ and $(b_4,c_4)$) with degree at least $2$ and $1$ respectively and we get a graph isomorphic to $H_{3321}$.\\

\noindent \textbf{Case 2:} There is a $Q_i$ and $Q_j$, $i\neq j$ such that exactly one disjoint pair in $Q_i\times Q_j$ has degree $4$.
\vskip8pt\noindent
Let $Q_2,Q_3$ be such a pair. Consider the largest set of disjoint pairs in $Q_2\times Q_3$ with one pair of degree $4$ and the remaining of degree $3$. Note that if there are two disjoint pairs of degree $3$ besides the degree $4$ pair in the selected set, then clearly we have an $H_{432}$. So we consider the following two subcases based on whether or not there is a pair of degree $3$ in the selected set. Let $(b_1,c_1)$ be the degree $4$ pair in the selected set. 
\vskip8pt
\noindent \textbf{Subcase 2.1} There is another pair in $Q_2\times Q_3$ disjoint from $(b_1,c_1)$ with degree $3$. 
\vskip8pt
Assume that $deg(b_2,c_2) = 3$. First observe that
\begin{enumerate}
 \item If any pair in $\{b_3,b_4\}\times\{c_3,c_4\}$ has degree at least $2$ then that pair together with $(b_1,c_1)$ and $(b_2,c_2)$ makes an $H_{432}$.

\item If both $deg(b_1,c_2)$ and $deg(b_2,c_1)$ are $4$ then we get two disjoint degree $4$ pairs. Therefore we have that the number of edges incident to pairs in $\{b_1,b_2\}\times\{c_1,c_2\}$ is at most $4+3+4+3 = 14$.

\item If there is a pair in $\{b_3,b_4\}\times\{c_3,c_4\}$ (say $(b_3,c_3)$) such that the {\em crossing degree sum} of $(b_2,c_2)$ and $(b_3,c_3)$ is at least $5$. Then we must have that one crossing pair is of degree at least $3$, and the other is of degree at least $2$ (because none of them can be of degree $4$). These $2$ crossing pairs together with $(b_1,c_1)$ makes the disjoint pairs of an $H_{432}$. Therefore we must have that $(b_2,c_2)$ and any pair in $\{b_3,b_4\}\times\{c_3,c_4\}$ have their {\em crossing degree sum} at most $4$.

\item Similarly $(b_1,c_1)$ and any pair in $\{b_3,b_4\}\times\{c_3,c_4\}$ have their {\em crossing degree sum} at most $6$. 
\end{enumerate}

\noindent Assume that $(b_1,c_1)$ and $(b_3,c_3)$ have their crossing degree sum, equal to $6$. If the degrees of crossing pairs are $4$ and $2$, then these crossing pairs and $(b_2,c_2)$ makes the disjoint pairs of an $H_{432}$. On the other hand if both the crossing pairs have degree $3$. Then $(b_1,c_3)$, $(b_3,c_1)$ and $(b_2,c_2)$ are three disjoint pairs of degree $3$. From observation 1 we have $deg(b_3,c_3)\leq 1$ and from observation 3 the crossing degree sum of $(b_3,c_3)$ and $(b_2,c_2)$ is at most $4$. Which together with observation 2 gives us that the total number of edges incident to pairs in $\{b_1,b_2,b_3\}\times\{c_1,c_2,c_3\}$ is at most $14+6+1+4=25$. Now if $deg(b_4,c_4)=1$ then we have an $H_{3321}$, otherwise we have that the number of edges containing either $b_4$ or $c_4$ is at least $37-25\geq 12$. By observation 1 we have that both $deg(b_4,c_3)$ and $deg(b_3,c_4)$ are at most $1$  hence we must have that the {\em crossing degree sum} of $(b_2,c_2)$ and $(b_4,c_4)$ is $4$ with one crossing pair of degree at least $1$ and the other of degree at least $2$. These crossing pairs together with $(b_1,c_3)$ and $(b_3,c_1)$ gives us an $H_{3321}$.
\vskip10pt
\noindent On the other hand if for any pair in $\{b_3,b_4\}\times\{c_3,c_4\}$ and $(b_1,c_1)$ their {\em crossing degree sum} is be at most $5$. Then the number of edges incident to pairs in $\{b_3,b_4\}\times\{c_3,c_4\}$ is at least $37-14-4(2)-5(2) =5$. Which implies that there must a degree $2$ pair, and hence by observation 1, we get an $H_{432}$.\\ 

\noindent\textbf{Subcase 2.2} There is no pair of degree $3$ disjoint from $(b_1,c_1)$.
\vskip6pt
In this case again as in observation 2, the {\em crossing degree sum} of $(b_1,c_1)$ and any pair in $\{b_2,b_3,b_4\}\times\{c_2,c_3,c_4\}$ is at most $6$ (as any other case results in two disjoint pairs of degree $4$ and $3$). This implies that the number of edges incident to pairs in $\{b_2,b_3,b_4\}\times\{c_2,c_3,c_4\}$ is at least $37- 4 - 3(6) = 15$ and no pair has degree $3$. Which implies that there are three disjoint pairs in $\{b_2,b_3,b_4\}\times\{c_2,c_3,c_4\}$ with degrees $2$, $2$ and at least $1$ respectively. These pairs and $(b_1,c_1)$ makes the $4$ disjoint pairs of an $H_{4221}$.\\ 

\noindent \textbf{Case 3:} In every $Q_i$ and $Q_j$, $i\neq j$ there are exactly two disjoint pairs in $Q_i\times Q_j$ with degree $4$.\\

Consider $Q_1,Q_2$ and assume that $(a_1,b_1)$ and $(a_2,b_2)$ are the two disjoint pairs with degree $4$. We make the following observations:
\begin{enumerate}
\item If any pair in $\{a_3,a_4\}\times\{b_3,b_4\}$ has degree at least $2$ then that pair together with $(a_1,b_1)$ and $(a_2,b_2)$ makes the disjoint pairs of an $H_{432}$. Therefore the total number of edges spanned by pairs in  $\{a_3,a_4\}\times\{b_3,b_4\}$ is at most $4$.

\item For any of $(a_1,b_1)$ and $(a_2,b_2)$ and any pair in $\{a_3,a_4\}\times\{b_3,b_4\}$ their {\em crossing degree sum} can be at most $5$. Indeed otherwise say the {\em crossing degree sum} of $(a_1,b_1)$ and $(a_3,b_3)$ is $6$, then we must have that one crossing pair has degree at least $3$, and the other has degree at least $2$. These crossing pairs and $(a_2,b_2)$ make the disjoint pairs of an $H_{432}$. Furthermore if any such {\em crossing degree sum} is $5$ then by the same reasoning as above, it must be that one crossing pair is of degree $4$ and the other is of degree $1$. 


\item If the total number of edges spanned by pairs in $\{a_1,a_2\}\times\{b_1,b_2\}$ is at most $12$, then the number of edges that uses one vertex from $\{a_1,a_2,b_1,b_2\}$ and one vertex from $\{a_3,a_4,b_3,b_4\}$ is at least $37-12-4 =21$. Hence there will be a pair in $\{a_3,a_4\}\times\{b_3,b_4\}$ (say $(a_3,b_3)$) such that {\em crossing degree sum} of $(a_3,b_3)$ and at least one of $(a_1,b_1)$ and $(a_2,b_2)$ is at least $6$, and by observation 2 we get an $H_{432}$. 

\end{enumerate}
\vskip 4pt
The above observations are true for any two disjoint pairs of degree $4$. We choose two disjoint pairs of degree $4$, (say $(a_1,b_1)$ and $(a_2,b_2)$) pairs in $Q_1\times Q_2$ such that (i) $a_1$ has the maximum vertex degree among all vertices that are part of some pairs of degree $4$ and (ii) $deg(a_2,b_1)$ is as small as possible. Let $deg(a_2,b_1) = x$. Note that by observation 3, we have $x\geq 1$. We consider the following cases based on the value of $x$. 



\vskip8pt
\noindent\textbf{Subcase 3.1} $x= 4$
\vskip8pt
Note that that the number of edges in $\{a_1,a_2\} \times \{b_1,b_2\}$ is at most $16$ ($deg(a_1,b_2)\leq 4$). First we will show that both $deg(a_3,b_1),deg(a_4,b_1)\leq 1$. Assume that $deg(a_3,b_1)\geq 2$, but then as in observation 2 both $deg(a_1,b_3)$ and $deg(a_1,b_4)$ can be at most $2$. Hence by the maximality of the degree of $a_1$ we get $deg(a_3,b_1)+deg(a_4,b_1) \leq 4$. Now by observation 1, there must be at least $37-16-4-8 =9$ edges containing one vertex from $\{a_3,a_4,b_3,b_4\}$ and one of $\{a_2,b_2\}$. Which implies that the degree of $a_2$ or $b_2$ is strictly larger than that of $a_1$, a contradiction. So we have $deg(a_3,b_1)=deg(a_4,b_1)=1$. 
\vskip8pt
Now we show that both $deg(a_3,b_2),deg(a_4,b_2) \leq 1$. To see this first assume that either $deg(a_1,b_3)$ or $deg(a_1,b_4)$ is equal to $4$, (say $deg(a_1,b_3)=4$) then by the minimality of $x$ we must have that $deg(a_2,b_3)=4$ too, because if $deg(a_2,b_3)<x$, then we can exchange $b_1$ with $b_3$ to get a smaller value of $x$. But if  $deg(a_2,b_3)=4$ then by observation 2 we must have $deg(a_3,b_2)=deg(a_4,b_2)=1$. On the other hand if both $deg(a_1,b_3)$ and $deg(a_1,b_4)$ are at most $3$ ($deg(a_1,b_3)+deg(a_1,b_4) \leq 6$) then again there must be at least $37-16-4-8 =9$ edges containing some pair in $\{a_2\}\times \{b_3,b_4\}$ and $\{a_3,a_4\}\times \{b_2\}$. Which means at least one of these pairs must be of degree at least $3$. Say $deg(a_2,b_3)\geq 3$, but then by observation 2 we have $deg(a_3,b_2),deg(a_4,b_2) \leq 1$ and we are done. In case say $deg(a_3,b_2)\geq 3$ then we have $deg(a_3,b_2)+deg(a_4,b_2) \geq 7$ and we get that the degree of $b_2$ is larger than degree of $a_1$, a contradiction. 
\vskip8pt
So we have that $deg(a_3,b_1),deg(a_4,b_1),deg(a_3,b_2)\mbox{ and } deg(a_4,b_2) \leq 1$. This together with observation 1 implies that the number of edges containing some pair in $\{a_1,a_2\} \times \{b_1,b_2,b_3,b_4\}$ is at least $37-4-4 = 29$. Therefore the $2\times4\times4$ $3$-graph $H'(\{b_1,b_2\},Q_2,Q_3)$ has at least $37-4-4=29$ edges. There must be at least three vertices in $Q_2$ such that each one of them is part of at least three pairs in $(Q_2\times Q_3)$ that are of degree at least $2$. To see this assume that there are at most two such vertices in $Q_2$ (say $b_1$ and $b_2$). Then using the fact that the maximum degree of a pair in $Q_2\times Q_3$ in $H'$ is $2$, we get that $b_1$ and $b_2$ can be contained in at most $2(4\cdot 2) = 16$ edges. While at most $2$ pairs containing either $b_3$ and $b_4$ can be of degree $2$, we get that the number of edges containing either $b_3$ and $b_4$ is at most $2\cdot(2\cdot 2  + 2\cdot 1) = 12$ which implies that $|E(H')|\leq 28$ a contradiction 
\vskip 4pt
Now since in $Q_2$ there are at least $3$ vertices such that each one of them is part of at least $3$ pairs in $(Q_2\times Q_3)$ of degree at least $2$. Which implies that there must a degree $2$ pair disjoint from the two degree $4$ pairs (guaranteed in Case 3) in $Q_2\times Q_3$. Therefore we get an $H_{432}$.  

\vskip8pt
\noindent\textbf{Subcase 3.2} $x= 3$
\vskip8pt
\noindent Now we have that the number of edges in $\{a_1,a_2\} \times \{b_1,b_2\}$ is at most $15$. Again we first show that both $deg(a_3,b_1),deg(a_4,b_1)\leq 1$. Assume that $deg(a_3,b_1)\geq 2$, but then by observation 2 both $deg(a_1,b_3)$ and $deg(a_1,b_4)$ can be at most $2$. Hence by the maximality of the degree of $a_1$ we get $deg(a_3,b_1)+deg(a_4,b_1) \leq 5$. Now by observation 1, there must be at least $37-15-4-9 =9$ edges containing one vertex from $\{a_3,a_4,b_3,b_4\}$ and one of $\{a_2,b_2\}$. Which implies that the degree of $a_2$ or $b_2$ is strictly larger than that of $a_1$, a contradiction. So we have $deg(a_3,b_1)=deg(a_4,b_1)=1$. 
\vskip8pt
\noindent Similarly as in the previous case we show that both $deg(a_3,b_2),deg(a_4,b_2) \leq 1$. To see this first assume that either $deg(a_1,b_3)$ or $deg(a_1,b_4)$ is equal to $4$, (say $deg(a_1,b_3)=4$) then by the minimality of $x$ we must have that $deg(a_2,b_3)\geq 3$ too. But if $deg(a_2,b_3)=3$ then by observation 2 we must have $deg(a_3,b_2),deg(a_4,b_2)\leq1$. On the other hand if both $deg(a_1,b_3)$ and $deg(a_1,b_4)$ are at most $3$ ($deg(a_1,b_3)+deg(a_1,b_4) \leq 6$) then again there must be at least $37-15-4-8 =9$ edges containing some pair in $\{a_2\}\times \{b_3,b_4\}$ and $\{a_3,a_4\}\times \{b_2\}$. Which means at least one of these pairs must be of degree at least $3$. Say $deg(a_2,b_3)\geq 3$, but then by observation 2 we have $deg(a_3,b_2),deg(a_4,b_2) \leq 1$ and we are done. In case say $deg(a_3,b_2)\geq 3$ then we have $deg(a_3,b_2)+deg(a_4,b_2) \geq 7$ and we get that the degree of $b_2$ is greater than the degree of $a_1$, a contradiction. 
\vskip8pt
So we have that $deg(a_3,b_1),deg(a_4,b_1),deg(a_3,b_2)\mbox{ and } deg(a_4,b_2) \leq 1$. Again we get that the $2\times4\times4$ $3$-graph $H'(\{b_1,b_2\},Q_2,Q_3)$ has at least $37-4-4=29$ edges. and we are done. 

\vskip8pt
\noindent\textbf{Subcase 3.3} $x= 2$
\vskip8pt
\noindent Similarly as in the previous two subcases we have that $deg(a_3,b_1),deg(a_4,b_1),deg(a_3,b_2)\mbox{ and } deg(a_4,b_2) \leq 1$ and  the $2\times4\times4$ $3$-graph $H'(\{b_1,b_2\},Q_2,Q_3)$ has at least $37-4-4=29$ edges and we are done. 

\vskip8pt
\noindent\textbf{Subcase 3.4} $x= 1$
\vskip8pt

In this case observation 3 implies that the number of edges in $\{a_1,a_2\} \times \{b_1,b_2\}$ is exactly $13$ ($deg(a_1,b_2)=4$). Using $|E(H)\geq 37|$ and  observation $2$ we get that every pair in $\{a_3,a_4\}\times\{b_3,b_4\}$ has degree exactly $1$ and for any pair in $\{a_3,a_4\}\times\{b_3,b_4\}$ and any of $(a_1,b_1)$ or $(a_2,b_2)$ their {\em crossing degree sum} is exactly $5$ ($4+1$). 
\vskip8pt

\noindent Therefore, we have that either  $$deg(a_1,b_3)=deg(a_1,b_4)=deg(a_3,b_2)=deg(a_4,b_2)=4\;\;\; \mbox{     or}$$ $$deg(a_1,b_3)=deg(a_1,b_4)=deg(a_2,b_3)=deg(a_2,b_4)=4\;\;\;\;\;\;\;$$ 
\vskip8pt
\noindent In the latter case note that again we have that $H'(\{b_1,b_2\},Q_2,Q_3)$ has at least $29$ edges and we are done as above. 
\vskip8pt
\noindent So assume that $deg(a_1,b_3)=deg(a_1,b_4)=deg(a_3,b_2)=deg(a_4,b_2)=4$ and $deg(a_1,b_1) = deg(a_2,b_2) = deg(a_1,b_2) = 4$ and every other pair in $Q_1\times Q_2$ is of degree exactly $1$. This means that $a_1$ and $b_2$ are not part of any degree $1$ pair. Now the neighborhoods of all the degree $1$ pairs must be the same vertex in $Q_3$ (say $c_3$). Because otherwise we get a perfect matching in $H$, (using those two vertices in $Q_3$ for two disjoint degree $1$ pairs, and the remaining two vertices of $Q_3$ are matched with two of the degree $4$ pairs in $(Q_1\times Q_2)$). But if all of these edges are incident to $c_3$, then all edges in this graph are incident to at least one vertex in $\{a_1,b_2,c_3\}$ and the number of edges is exactly $37$, hence $H$ is isomorphic to $H_{ext}$. \hfill{ } 
\end{proof}

\vskip 16pt

\noindent Let $A$, $B$ and $C$ be three disjoint balanced complete $4$-partite $4$-graphs with color classes  $(A_1,\ldots,A_4)$, $(B_1,\ldots,B_4)$ and $(C_1,\ldots,C_4)$ respectively, and $|A_1|= |B_1|= |C_1| = m$. Let $Z$ be a set of vertices disjoint from vertices in $A$, $B$ and $C$. For a small constant $\eta>0$, we say that $Z$ is {\em connected} to a triplet of color classes $(A_i,B_j,C_k)$, $1\leq i,j,k \leq 4$, if $d_4(Z,(A_i \times B_j\times C_k)) \geq 2\eta$. For $Z$ and $(A,B,C)$ we define an auxiliary graph, (the {\em link graph}), $L_{abc}$ to be a $4\times4\times4$ $3$-graph where the vertex set of each color class of $L_{abc}$ corresponds to the color classes in $A$, $B$ and $C$. While a triplet of vertices $(a_i,b_j,c_k)$ is an edge in $L_{abc}$ iff $Z$ is connected to the triplet of color classes $(A_i,B_j,C_k)$. 
\vskip 10pt
\noindent Given three balanced complete $4$-partite $4$-graphs $A$, $B$ and $C$ and another set of vertices $Z$, as above, we say that we can {\em extend} $(A,B,C)$ if we can build another set of balanced complete $4$-partite $4$-graphs using  $V(A)\cup V(B) \cup V(C) \cup Z$ such that the total number of vertices in the new $4$-partite $4$-graphs is at least $12m + \eta m/16$ and the size of a color class in each new $4$-partite $4$-graph is $\beta\sqrt{\log m}$. In what follows we outline a procedure to extend $(A,B,C)$ using the structure of $L_{abc}$.


\begin{lemma}\label{extensionLemma}
For $\eta,c >0$, let $A,B$ and $C$ be three balanced complete $4$-partite $4$-graphs such that $|A_1|= |B_1|= |C_1| = m$. If $Z$ is a disjoint set of vertices with $|Z|\geq c 2^{m^3}$. If the link graph $L_{abc}$ has at least $37$ edges and $L_{abc}$ is not isomorphic to $H_{ext}$ then we can extend $(A,B,C)$. 
\end{lemma}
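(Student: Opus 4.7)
Apply Lemma \ref{ext_categories} to $L_{abc}$. Since $|E(L_{abc})|\geq 37$ and $L_{abc}$ is not isomorphic to $H_{ext}$, we conclude that $L_{abc}$ either has a perfect matching or contains a subgraph isomorphic to one of $H_{3321}$, $H_{432}$, $H_{4221}$. In each case, the plan is to build the new collection of $4$-partite $4$-graphs in two phases. Phase $1$ uses selected edges of $L_{abc}$ to extract, one at a time, vertex-disjoint copies of $K^{(4)}(t)$ with $t=\beta\sqrt{\log m}$ that each draw $t$ vertices from $Z$. Phase $2$ partitions what remains of each color class of $A, B, C$ into chunks of size $t$, producing an (almost) balanced cover of the leftover $A\cup B\cup C$ by copies of $K^{(4)}(t)$.

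The key tool in Phase $1$ is Lemma \ref{4partVolArg}: for an edge $(a_i,b_j,c_k)$ of $L_{abc}$, the condition $d_4(Z,A_i\times B_j\times C_k)\geq 2\eta$ together with $|Z|\geq c\,2^{m^3}$ matches that lemma's hypothesis, yielding a copy of $K^{(4)}(t)$ using $t$ vertices from each of $A_i, B_j, C_k, Z$. After each extraction, the density of the residual $4$-partite $4$-graph still exceeds $\eta$, so the extraction can be iterated many times, provided the cumulative usage of any one color class stays a small fraction of $m$. If Phase $1$ extracts $R$ copies in total, it absorbs $Rt$ vertices from $Z$.

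In the perfect matching case, balancing is automatic: the four matching edges use each color class of $A, B, C$ exactly once, so distributing the $R$ extractions uniformly across the four edges removes exactly $Rt/4$ vertices from every color class. Phase $2$ then decomposes what remains of each of $A, B, C$ into $\lfloor (m-Rt/4)/t\rfloor$ copies of $K^{(4)}(t)$, covering $4m-Rt-O(t)$ vertices per graph. The total vertex count of the new collection becomes $4Rt+3(4m-Rt)-O(t)=12m+Rt-O(t)$, so choosing $R=\lceil\eta m/(16t)\rceil+O(1)$ yields the required excess of $\eta m/16$; the density check is satisfied throughout since $Rt/4\ll m$.

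For the remaining cases, the substructure provides three or four disjoint pairs in some $Q_i\times Q_j$ (say pairs of color classes in $B, C$) with specified degrees into the third class ($A$). The plan is to assign an extraction count $r_\ell$ to each pair $(B_\ell,C_\ell)$ and split these across its $A$-neighbors so that the color-class usage across $A, B, C$ is as even as possible. Since the uncovered vertex count in each of $A, B, C$ after Phase $2$ equals $4\max-\sum$ over the color-class usages, any imbalance costs at most a fixed fraction of $Rt$, and as long as this fraction is strictly less than $1$, choosing $R$ a suitable constant multiple of $\eta m/t$ still yields the required excess. The hardest sub-case will be $H_{4221}$ without a perfect matching, in which the three low-degree pairs share a common $2$-element $A$-neighborhood; the nine edges of the substructure then force concentration on those two color classes, and we must exploit additional edges of $L_{abc}$ beyond the substructure (which exist since $|E(L_{abc})|\geq 37\gg 9$) to spread the $A$-usage across all four color classes. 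Excluding $H_{ext}$ is precisely what rules out the pathological configuration in which every edge of $L_{abc}$ is incident to three specific vertices: in $H_{ext}$, extractions inevitably concentrate on the three corresponding color classes and no balanced extension is possible.
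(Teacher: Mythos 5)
Your overall route is the same as the paper's: apply Lemma \ref{ext_categories} to reduce to the four non-$H_{ext}$ outcomes, then in each case repeatedly apply Lemma \ref{4partVolArg} to triplets of color classes that are edges of $L_{abc}$ to pull out balanced complete $4$-partite $4$-graphs using $Z$, and finally restore balance by discarding vertices and splitting. Your treatment of the perfect-matching case is essentially the paper's argument and is fine.

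The gap is in the three substructure cases, which is where the actual content of the lemma lies. You assert generically that ``any imbalance costs at most a fixed fraction of $Rt$, strictly less than $1$,'' but this is precisely what must be verified by choosing explicit extraction counts per pair and per $A$-neighbor. It is not automatic: in the $H_{432}$ case only three of the four color classes of $B$ and of $C$ are touched, so rebalancing $B\cup C$ alone already costs $2t$ against a gain of $3t$ from $Z$, and one must distribute the $A$-side usage so that $\max_p u(A_p)<t$ (the paper's distributions give $5t/6$, resp.\ $3t/4$, for the two types of $H_{432}$); similar threshold checks ($\max_p u(A_p)<2t$) are needed for $H_{3321}$ and $H_{4221}$. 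Moreover, the one case you analyze in detail, $H_{4221}$, you analyze incorrectly: the degree-$4$ pair $(B_1,C_1)$ of the substructure is by definition connected to \emph{all four} color classes of $A$, so the nine edges of $H_{4221}$ do not force concentration on $A_3,A_4$. The paper simply routes the extractions for $(B_1,C_1)$ through $A_1$ and $A_2$, obtaining usage $(t/2,t/2,3t/2,3t/2)$ and a net gain of $2t$ after discarding; no edges of $L_{abc}$ outside the substructure are needed, and your proposed appeal to such extra edges comes with no argument that they can be used (an arbitrary additional edge need not help balance, and ``many edges'' alone is exactly what $H_{ext}$ shows to be insufficient). So as written the non-matching cases are a plan rather than a proof, and the missing piece is the explicit per-case bookkeeping that constitutes the paper's argument.
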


\begin{proof}
Since $L_{abc}$ is a $4\times4\times4$ $3$-graph with at least $37$ edges and is not isomorphic to $H_{ext}$, for each of the other cases as in Lemma \ref{ext_categories}, we give the procedure to extend $(A,B,C)$. 

\vskip6pt 
\noindent {\bf Case 1:} $L_{abc}$ has a perfect matching:

\vskip3pt 
\noindent Without loss of generality assume that the perfect matching in $L_{abc}$ corresponds to $\{(A_i,B_i,C_i)\; : 1\leq i\leq 4\}$ i.e. $Z$ is connected to the triplets $\{(A_i,B_i,C_i)\; : 1\leq i\leq 4\}$. Note that by the definition of  connectedness and the sizes of the sets the $4$-partite $4$-graph $(A_1,B_1,C_1,Z)$ satisfies the conditions of Lemma \ref{4partVolArg}. Hence we find a complete balanced $4$-partite $4$-graph $X_1 = (A_1^1,B_1^1,C_1^1,Z^1)$, such that $$ Z^1 \subset Z,\;\; {A}_1^1 \subset A_1 \mbox{  ,  } {B}_1^1 \subset B_1 \mbox{  and  } C_1^1 \subset C_1  \;\; \mbox{ and }$$ $$|Z^1|=|A_1^1| = |B_1^1|=|C_1^1| = \beta\sqrt{\log m} \;\;\mbox{  where } \beta \mbox{ is as in Lemma \ref{4partVolArg}}$$

\begin{figure}[h!] 
\centering
\includegraphics[scale=.42]{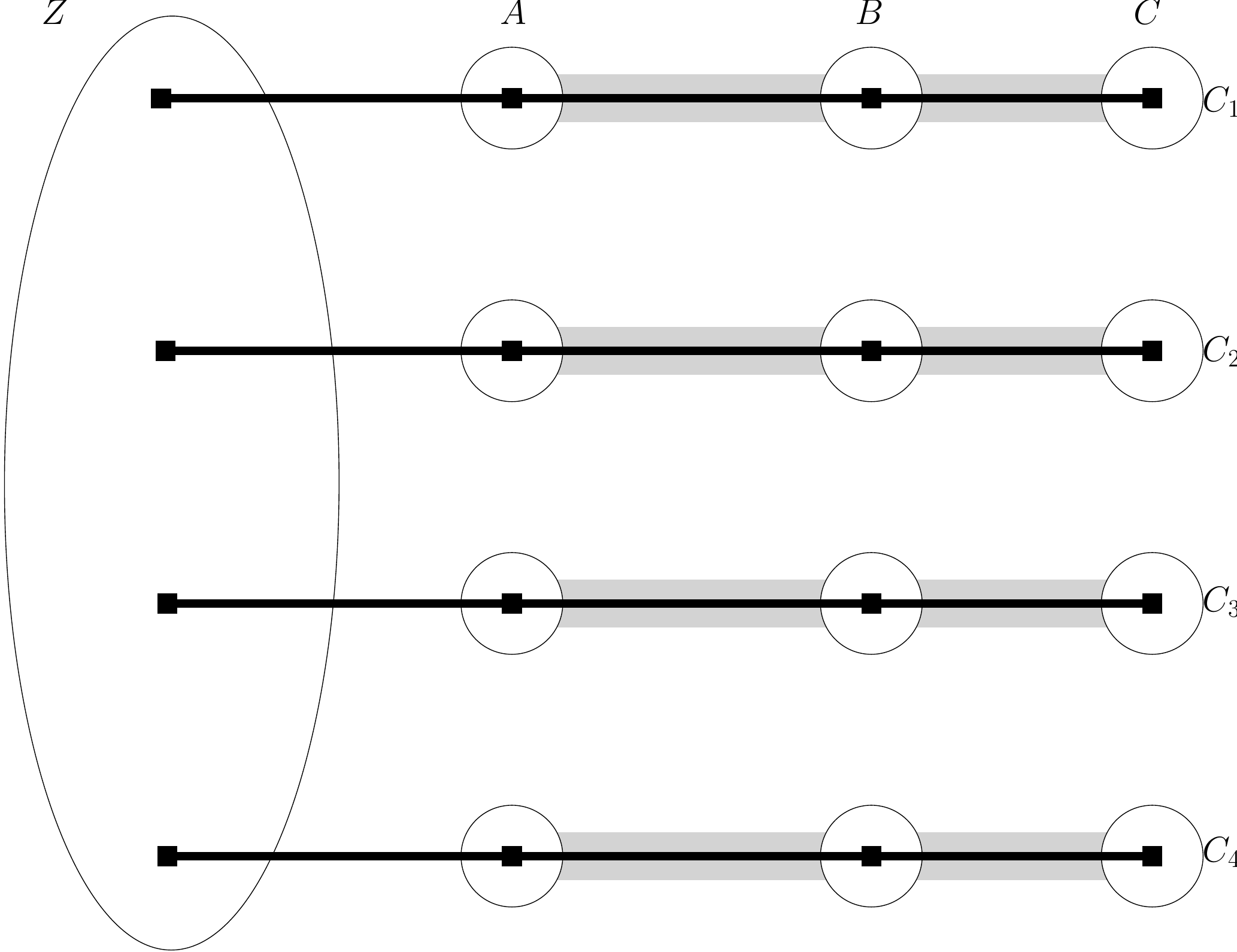} 
\caption{\footnotesize{Extending $(A,B,C)$ when $L_{abc}$ has a perfect matching: The shaded boxes represent a triplet {\em connected} to $Z$, while solid thick lines represent a balanced complete $4$ partite graphs}}
\label{fig_4_unif_PM_extended}
\end{figure}

\noindent Similarly, we find such complete balanced $4$-partite $4$-graphs $X_2$, $X_3$ and $X_4$ in $(A_2,B_2,C_2,Z)$, $(A_3,B_3,C_3,Z)$ and $(A_4,B_4,C_4,Z)$ respectively, that are disjoint from each other (as $|Z|$ is very large compared to $m$) (see Figure \ref{fig_4_unif_PM_extended}). We remove the vertices in $X_1,X_2,X_3$ and $X_4$ and make these four new $4$-partite $4$-graphs. In the remaining parts of $A,B$ and $C$ we remove another such set of $4$ disjoint complete balanced $4$-partite $4$-graphs. Again by definition of connectedness and Lemma \ref{4partVolArg} we can continue this process until we remove at least $\eta m/8$ vertices from each color class of $A,B$ and $C$.
\vskip6pt
\noindent Note that the new $4$-partite $4$-graphs use at least $4\eta m/8$ vertices from $Z$. Therefore these new $4$-partite $4$-graphs together with leftover parts of $A,B$ and $C$ have at least $3(4m) + \eta m/2$ vertices while all the $4$-partite $4$-graphs are balanced. Hence, we extended $(A,B,C)$. 

\vskip8pt \noindent {\bf Case 2:} $L_{abc}$ has a subgraph isomorphic to $H_{432}$: 

\vskip4pt \noindent In this case we show in detail how to extend such an $(A,B,C)$, while in the latter cases we will only briefly outline the procedure. First assume that the $H_{432}$ in $L_{abc}$ is as in Figure \ref{fig_H_{432_1}} and let the pairs corresponding to the degree $4$, $3$ and $2$ pairs in this subgraph be $(B_1,C_1)$, $(B_2,C_2)$ and $(B_3,C_3)$ respectively.  Furthermore let the color classes corresponding to the neighbors of degree $3$ and degree $2$ pairs in $H_{432}$ be $\{A_2,A_3,A_4\}$ and $\{A_3,A_4\}$ respectively. 
\vskip3pt
\noindent Using the definition of connectedness and Lemma \ref{4partVolArg} we find two disjoint complete balanced $4$-partite $4$-graphs $X_1 = (A_4^1,B_3^1,C_3^1,Z^1)$ and $X_2 = (A_3^2,B_3^2,C_3^2,Z^2)$ such that $ Z^j, A_i^j,B_i^j,C_i^j$ are subsets of $Z,A_i,B_i,C_i$ respectively. 
\begin{figure}[h!] 
\centering
\includegraphics[scale=0.58]{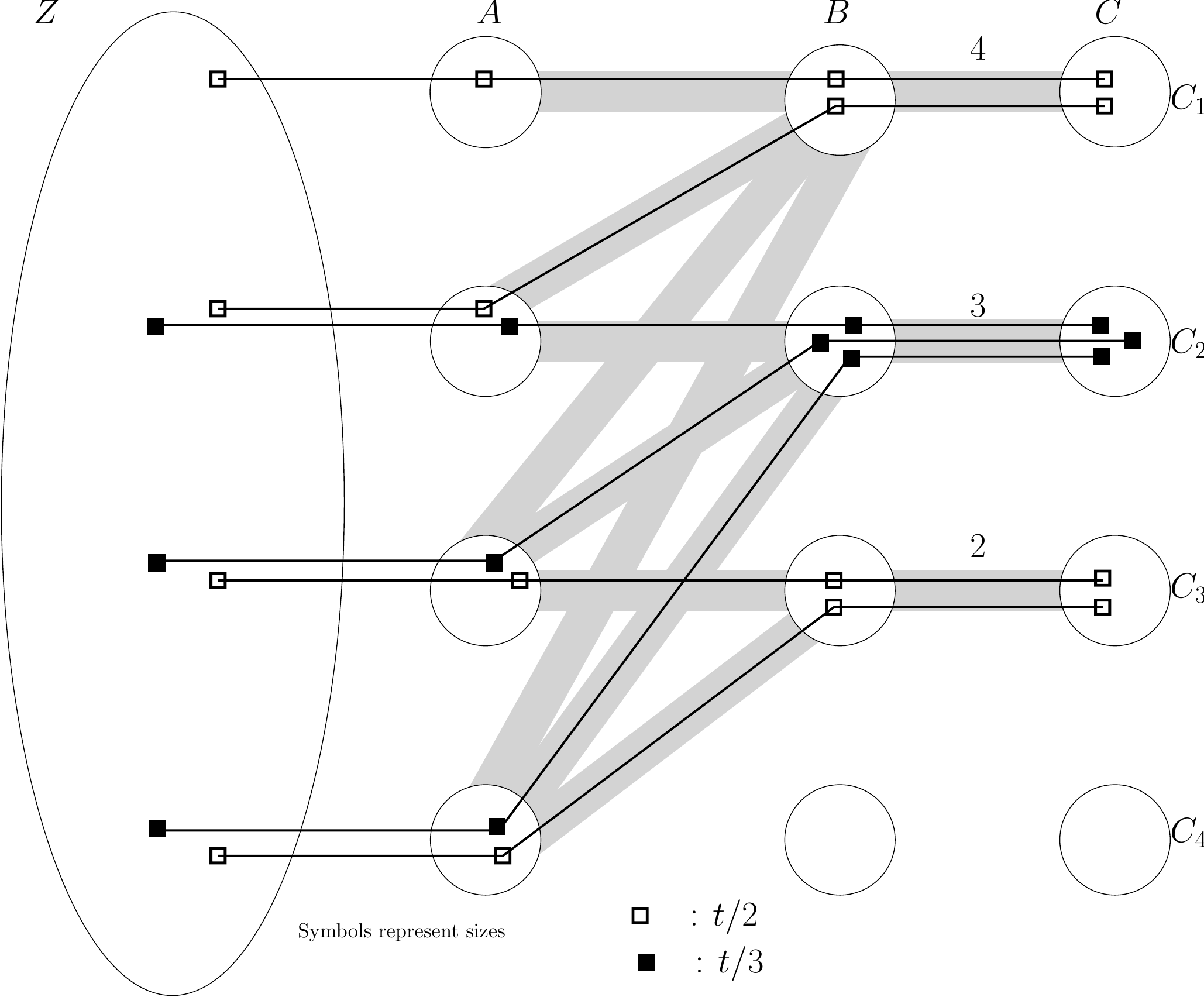} 
\caption{\footnotesize{Extending $(A,B,C)$ when $L_{abc}$ has an $H_{432}$ as in Figure \ref{fig_H_{432_1}}: The shaded boxes represent a triplet {\em connected} to $Z$, while solid think lines represent a balanced complete $4$ partite $4$-graphs}}
\label{fig_4_unif_H_432_1_extended}
\end{figure}

\noindent Similarly we find three more disjoint balanced complete $4$-partite $4$-graphs $X_3,X_4$ and $X_5$ where vertices of three color classes in all of them are from $B_2,C_2$ and $Z$ while vertices of the fourth color class are from $A_2,A_3$ and $A_4$ respectively. We build two more disjoint balanced complete $4$-partite $4$-graphs $X_6$ and $X_7$ such that vertices of three color classes in both of them are from $B_1,C_1$ and $Z$ while vertices of the fourth color class are from $A_1$ and $A_2$ respectively. 
\vskip 3pt

\noindent The size of a color class in $X_1,\ldots,X_7$ is $\beta\sqrt{\log m}$. We remove the vertices in $X_1,\ldots X_7$ from their color classes to make these $7$ new $4$-partite $4$-graphs
\vskip 6pt
In the remaining parts of $A,B$ and $C$ we remove another such set of seven disjoint balanced complete $4$-partite $4$-graphs that are disjoint from the previous ones. Again by definition of connectedness and Lemma \ref{4partVolArg} we can continue this process until we remove $\eta m/8$ vertices each from $B_i$ and $C_i$, $1\leq i\leq 3$. By construction, if the number of vertices used from $B_i$ and $C_i$, $1\leq i\leq 3$ is $t$ ($=\eta m/8$) then the number of vertices used in $A_2,A_3$ and $A_4$ is $5t/6$, while that in $A_1$ is $t/2$ (see Figure \ref{fig_4_unif_H_432_1_extended}).
\vskip6pt
\noindent Note that the new $4$-partite $4$-graphs use at least $3t \geq 3\eta m/8$ vertices from $Z$, but the remaining parts of $A$, $B$ and $C$ are not balanced ($A_1$, $B_4$ and $C_4$ have more vertices). To restore the balance in the remaining part of $A$ we discard some arbitrary $t/3$ vertices from the remaining part of $A_1$. Similarly we discard some arbitrary $t$ vertices from $B_4$ and $C_4$ to restore the balance in the remaining part of $B$ and $C$. Therefore the new $4$-partite $4$-graphs together with leftover parts of $A,B$ and $C$ (after discarding the vertices) have at least $4(|A_1|+|B_1|+|C_1|) + 3t - t/3 - 2t \geq 12m + \eta m/12$ vertices while all the $4$-partite $4$-graphs are balanced. Hence we extended $(A,B,C)$.
\begin{figure}[h!] 
\centering
\includegraphics[scale=0.58]{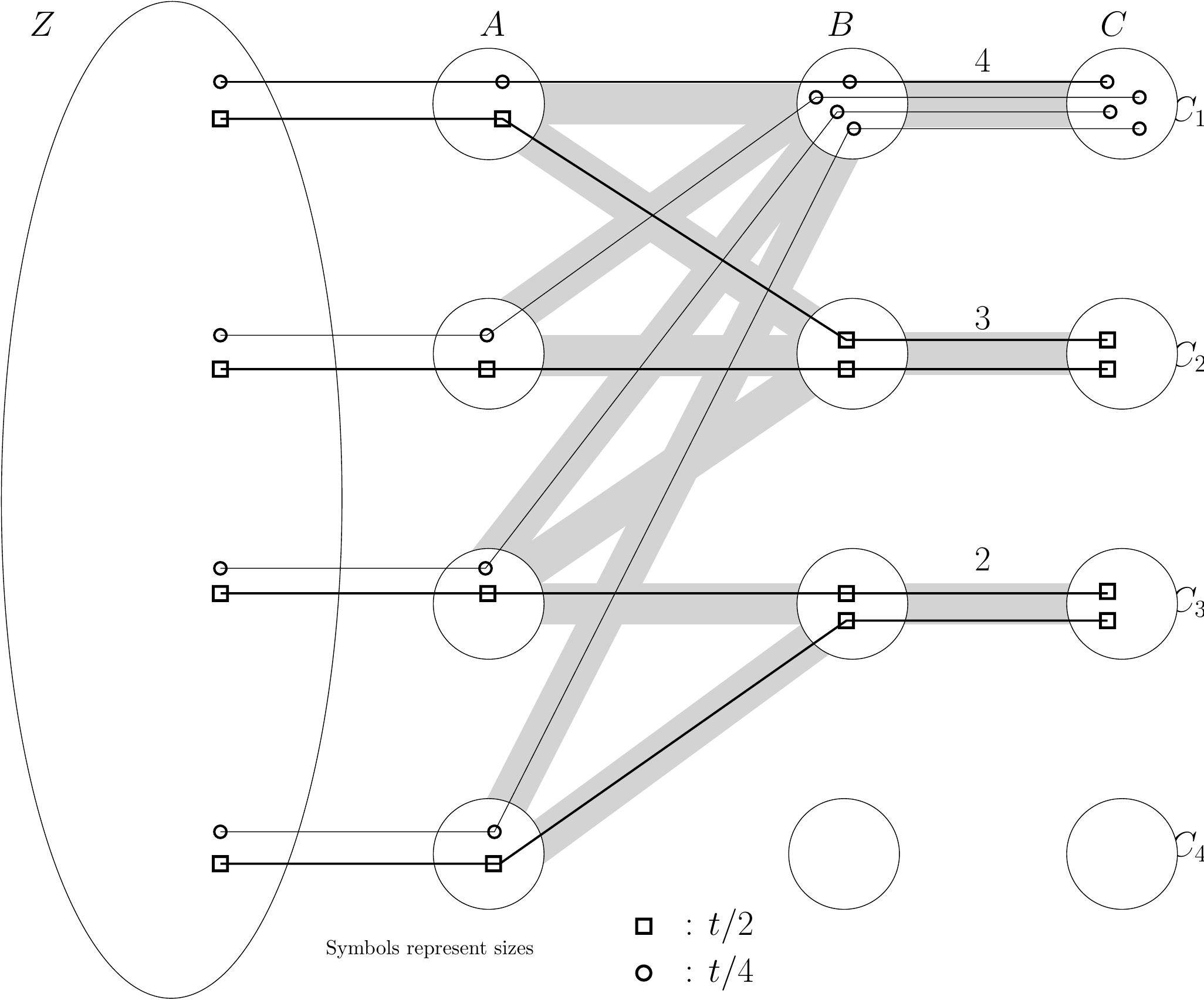} 
\caption{\footnotesize{Extending $(A,B,C)$ when $L_{abc}$ has an $H_{432}$ as in Figure \ref{fig_H_{432_2}}: The shaded boxes represent a triplet {\em connected} to $Z$, while solid thick lines represent a balanced complete $4$ partite $4$-graphs}}
\label{fig_4_unif_H_432_2_extended}
\end{figure}

\vskip2pt
\noindent On the other hand if the $H_{432}$ in $L_{abc}$ is as in Figure \ref{fig_H_{432_2}}, then let the color classes corresponding to the neighbors of degree $3$ and degree $2$ pairs in $H_{432}$ be $\{A_1,A_2,A_3\}$ and $\{A_3,A_4\}$ respectively. In this case extend $(A,B,C)$ as follows.

\vskip3pt 
\noindent For the pair $(B_3,C_3)$ we remove two balanced complete $4$-partite $4$-graphs with the fourth color classes in $A_3$ and $A_4$ respectively. For the pair $B_2,C_2$ we remove two balanced complete $4$-partite $4$-graphs with the fourth color classes in $A_1$ and $A_2$ respectively. The size of of each color class in all of these new $4$-partite graphs is $\beta\sqrt{\log m}$. Since the pair $(B_1,C_1)$ has degree $4$, we remove four balanced complete $4$-partite $4$-graphs with the fourth color class in $A_1,A_2,A_3$ and $A_4$ respectively.

\vskip3pt 
Similarly as in the previous case we repeat this process so that we remove at least $t\geq\eta m/8$ vertices from each $B_i$ and $C_i$, $1\leq i\leq 3$. Note that by construction we have used $3t/4$ vertices in each color class of $A$ (see Figure \ref{fig_4_unif_H_432_2_extended}). So the remaining part of $A$ is still balanced. While to restore balance in the remaining parts of $B$ and $C$, we discard some arbitrary $t$ vertices from each of $B_4$ and $C_4$. Again in total we added $3t$ vertices from $Z$, while we discarded $2t$ vertices form $B_4$ and $C_4$. Therefore the net increase in the number of vertices in the new set of complete $4$-partite $4$-graphs is $t\geq\eta m/8$, while all the $4$-partite $4$-graphs are balanced.

\noindent {\bf Case 3:} $L_{abc}$ has a subgraph isomorphic to $H_{4221}$: 

\vskip2pt 
\noindent Without loss of generality, assume that the pairs corresponding to the degree $4$, $2$, $2$ and $1$ pairs in this $H_{4221}$ are $(B_1,C_1)$, $(B_2,C_2)$, $(B_3,C_3)$ and $(B_4,C_4)$ respectively.  Furthermore let the color classes corresponding to the neighbors of degree $1$ and the two degree $2$ pairs in $H_{4221}$ be $\{A_4\}$, $\{A_3,A_4\}$ and $\{A_3,A_4\}$ respectively (as in Figure \ref{fig_H_{4221_1}}). By the definition of connectedness and Lemma \ref{4partVolArg} we build complete balanced $4$-partite $4$-graphs using $(A_4,B_4,C_4,Z)$ and $(A_3,B_3,C_3,Z)$ of size $\beta\sqrt{\log m}$. For the pair $(B_2,C_2)$ we make two more balanced complete  $4$-partite $4$-graphs using $(A_4,B_2,C_2,Z)$ and $(A_3,B_2,C_2,Z)$. For the degree $4$ pair $(B_1,C_1)$ we remove balanced complete  $4$-partite $4$-graphs in $(A_1,B_1,C_1,Z)$ and $(A_2,B_1,C_1,Z)$.
\begin{figure}[h!] 
\centering
\includegraphics[scale=0.58]{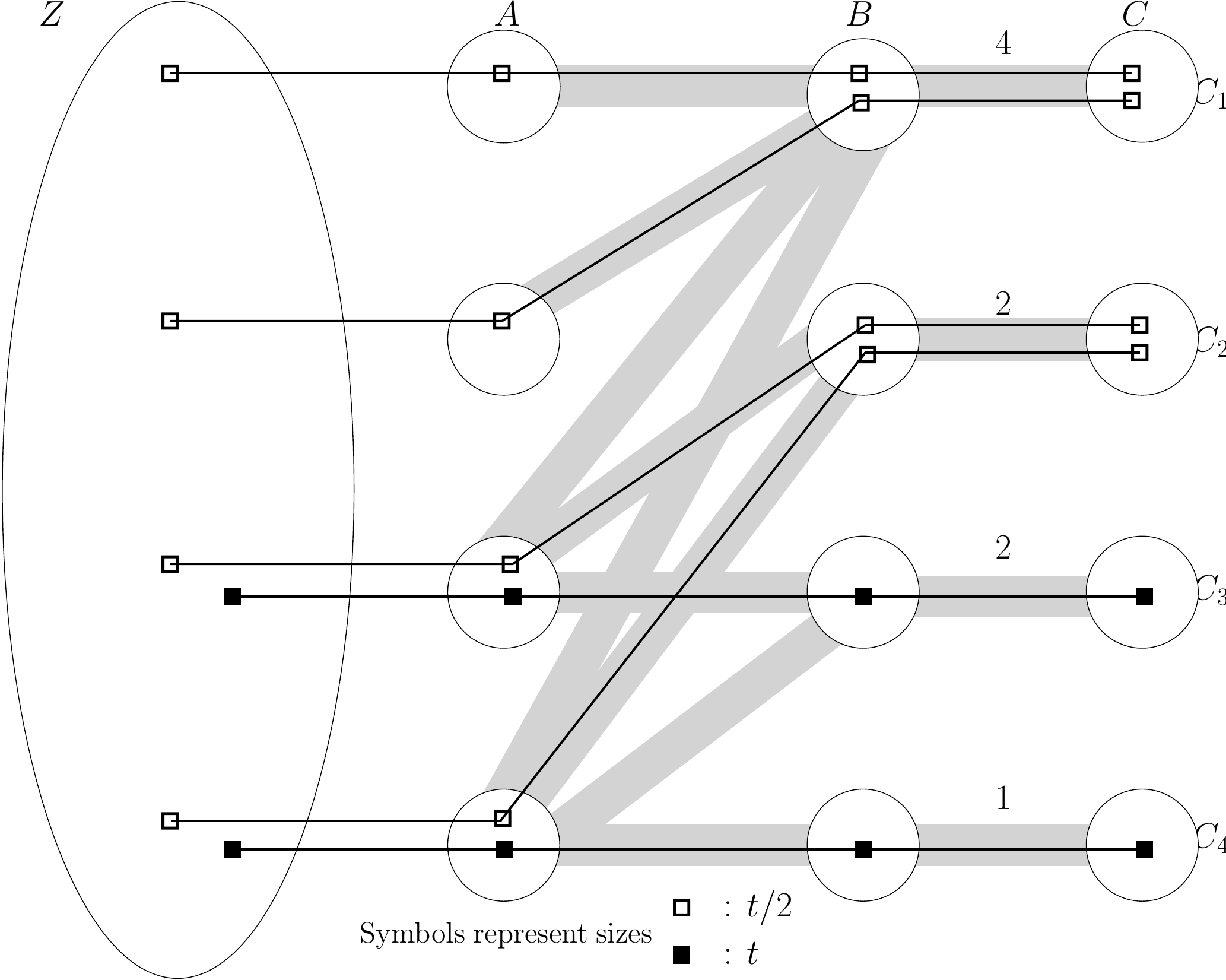} 
\caption{\footnotesize{Extending $(A,B,C)$ when $L_{abc}$ has an $H_{4221}$: The shaded boxes represent a triplet {\em connected} to $Z$, while solid thick lines represent a balanced complete $4$ partite $4$-graphs}}
\label{fig_4_unif_H_4221_extended}
\end{figure}
Again we repeat this process so that we remove $t \geq \eta m/24$ vertices from each color class of $B$ and $C$. Note that with this process we have used $3t/2$ vertices in $A_3$ and $A_4$ while $t/2$ vertices each in $A_1$ and $A_2$ (see Figure \ref{fig_4_unif_H_4221_extended}). Furthermore only the remaining part of $A$ is not balanced. The balance can be restored by discarding $t$ vertices each from the remaining part of $A_1$ and $A_2$ which results in the net increase of $2t\geq \eta m/12$ vertices in all the balanced complete tripartite graphs.
\vskip3pt
\noindent {\bf Case 4:} $L_{abc}$ has a subgraph isomorphic to $H_{3321}$: 

\vskip2pt \noindent
Assume that the pairs corresponding to the degree $3$, $3$, $2$ and $1$ pairs in the $H_{3321}$ are $(B_1,C_1)$, $(B_2,C_2)$, $(B_3,C_3)$ and $(B_4,C_4)$ respectively. Let the color classes corresponding to the neighbors of these pairs in $H_{3321}$ be $\{A_2,A_3,A_4\}$, $\{A_2,A_3,A_4\}$, $\{A_3,A_4\}$ and $\{A_4\}$. 
By the definition of connectedness and Lemma \ref{4partVolArg} we build three complete balanced $4$-partite graphs in each of $(A_4,B_4,C_4,Z)$, $(A_3,B_3,C_3,Z)$ and $(A_2,B_2,C_2,Z)$ of size $\beta\sqrt{\log m}$. In addition we build three more complete balanced $4$-partite $4$-graphs using $(B_1,C_1)$ and $Z$, while the vertices of the fourth color classes are in $A_2$, $A_3$ and $A_4$ respectively.  We repeat this process so as to remove at least $t \geq 3\eta m/64$ vertices each from each color class of $B$ and $C$. Clearly the remaining part of $B$ and $C$ are still balanced, while in $A$ we have used $4t/3$ vertices in each of $A_2,A_3$ and $A_4$. To restore the balance in remaining part of $A$ we discard arbitrary  $4t/3$ vertices from $A_1$. In the process the net increase in the number of vertices in the resultant balanced $4$-partite $4$-graphs is at least $8t/3 \geq 3\eta m/8$, hence $(A,B,C)$ is {\em extended}.\hfill{} \end{proof}
\begin{figure}[h!] 
\centering
\includegraphics[scale=0.575]{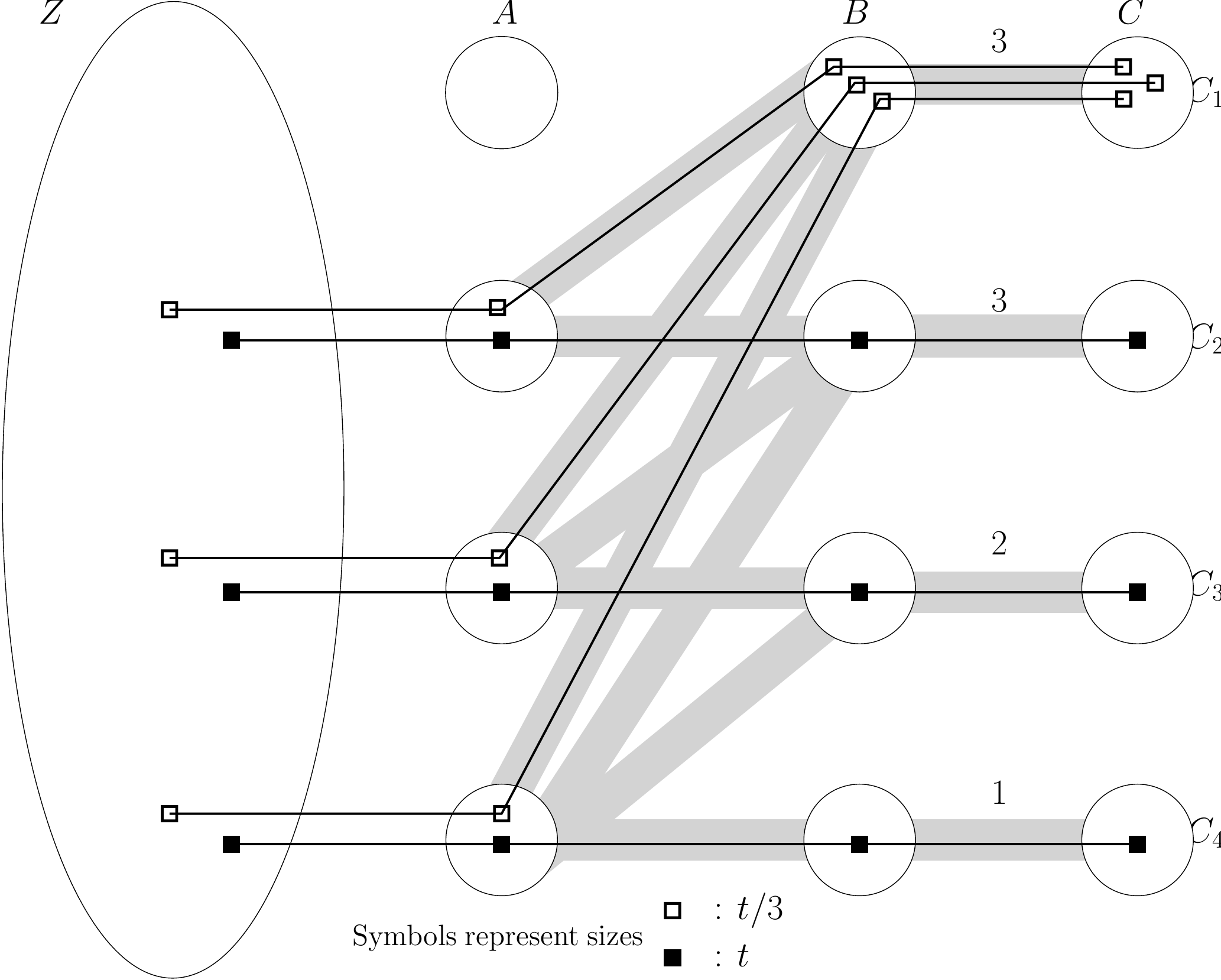} 
\caption{\footnotesize{Extending $(A,B,C)$ when $L_{abc}$ has an $H_{3321}$: The shaded boxes represent a triplet {\em connected} to $Z$, while solid thick lines represent a balanced complete $4$ partite $4$-graphs}}
\label{fig_4_unif_H_3321_extended}
\end{figure}


\section{The Non Extremal Case}\label{non_ext_case}

Throughout this section we assume that we have a $4$-graph $H$ satisfying (\ref{minDegree}) such that the extremal case does not hold for $H$. We shall assume that $n$ is sufficiently large and besides our main parameter $\gamma$ we use the parameters $\beta$ and $\alpha$ such that the following holds  \beq\label{para} \frac{12}{2^{1/\beta^2}} < \gamma = \alpha^4 \ll 1 \eeq where $a\ll b$ means that $a$ is sufficiently small compared to $b$. From (\ref{minDegree}) and (\ref{para}), when $n$ is large we have $$\delta_1(H)\geq{n-1 \choose 3}-{3n/4\choose 3} +1 > \frac{37}{64}{n-1\choose 3} - \frac{9n^2}{64} > \left(1/2 + 2\sqrt{\alpha}\right){n\choose 3}$$ Hence our hypergraph $H$ satisfies the conditions of Lemma \ref{absorbLemma} (the absorbing lemma). We remove from $H$  an {\em absorbing matching} $M$ of size at most $\alpha^2 n/4 = \sqrt{\gamma}n/4$. In the remaining hypergraph we find an almost perfect matching that leaves out a set of at most $\alpha^4n = \gamma n$ vertices. As guaranteed by Lemma \ref{absorbLemma} the vertices that are left out from the almost perfect matching can be absorbed into $M$, therefore we get a perfect matching in $H$. In what follows we work with the remaining hypergraph (after removing $V(M)$). For simplicity we still denote the remaining hypergraph by $H$ and assume that it is on $n$ vertices. Since $|V(M)|\leq \sqrt{\gamma}n$, in the remaining hypergraph we still have \beq\label{minDegOrig}\delta_1(H)\geq \left(\frac{37}{64}-6\sqrt{\gamma}\right){n \choose 3}\eeq as for any vertex $v$ in the remaining hypergraph, there can be at most $6\sqrt{\gamma}{n\choose 3}$ edges containing $v$ and at least one vertex in $V(M)$.

\subsection{The optimal cover}
Our goal is to find an almost perfect matching in $H$. We are going to build a cover ${\cal F} = \{Q_1,Q_2,\ldots$\}, such that, each $Q_i$ is a disjoint balanced complete $4$-partite $4$-graph in $H$ (we refer to them as $4$-partite graphs). We say that such a cover is {\em optimal} if it covers at least $(1-\gamma)n$ vertices. We will show that either we can find an optimal cover or $H$ is $\alpha$-extremal. It is easy to see that such an optimal cover readily gives us an almost perfect matching.
\vskip6pt
Using the following iterative procedure, we either build an optimal cover or find a subset of vertices, which shows that $H$ is $\alpha$-extremal. We begin with a cover ${\cal F}_0$. Then in each step $t\geq 1$, if ${\cal F}_{t-1}$ is not optimal, we find another cover ${\cal F}_{t}$, such that $|V({\cal F}_{t})|\geq |V({\cal F}_{t-1})| +\gamma^2 n/16$ (for this we use the notation, ${\cal F}_{t} > {\cal F}_{t-1}$). The size of a color class in each $4$-partite graph in ${\cal F}_t$ is $m_t$.  
\vskip6pt
To get the initial cover, ${\cal F}_{0}$, we repeatedly apply Lemma \ref{hyperKST} in the leftover of $H$, while the conditions of the lemma are satisfied and the number of leftover vertices are at least $\gamma n$, to find $4$-partite graphs, $K^{(4)}(m_0)$, where $m_0 = \beta(\log n)^{1/3}$. After the $t^{\mbox{\emph{th}}}$ step in this iterative procedure, if ${\cal F}_{t}$ is not an optimal cover, then we get ${\cal F}_{t+1}$. We will show that, unless $H$ is $\alpha$-extremal, we have $ {\cal F}_{t+1} > {\cal F}_{t}$ and $m_{t+1} = \beta\sqrt{\log m_{t}}$. Let ${\cal I}_t = V(H)\setminus V({\cal F}_t)$. Since ${\cal F}_t$ is not optimal and we cannot apply Lemma \ref{hyperKST} in $H|_{{\cal I}_t}$, we must have that $|{\cal I}_t| \geq \gamma n$ and \beq \label{denI} d_4({\cal I}_t) < \gamma .\eeq By non-extremality of $H$, this implies that $|V({\cal F}_0)| \geq n/4$.


\vskip4pt

\noindent In what follows we will show that if there are `many' edges with three vertices in ${\cal I}_t$ and one vertex in some $Q_i\in {\cal F}_t$ then we get ${\cal F}_{t+1} > {\cal F}_{t}$. To that end let $Q_i = (V_1^i,V_2^i,V_3^i,V_4^i)$ be a $4$-partite graph in ${\cal F}_t$, we say that a color class, $V_p^i$ of $Q_i$, $(1\leq p\leq 4)$ is {\em connected} to ${\cal I}_t$, if $d_4(V_p^i,{{\cal I}_t\choose 3}) \geq 2\gamma$. 
We will show that if a $4\gamma$-fraction of the $4$-partite graphs in ${\cal F}_t$ have at least $2$ color classes connected to ${\cal I}_t$, then we can we get ${\cal F}_{t+1} > {\cal F}_{t}$. To see this assume that we have a subcover ${\cal F}' \subset {\cal F}_t$, such that each $Q_i\in  {\cal F}'$ has at least $2$ color classes connected to ${\cal I}_t$ and $|V({\cal F}')|\geq \gamma n$. Let ${\cal F}' = \{Q_1,Q_2,\ldots\} \subset {\cal F}_t $ and without loss of generality, say $V_1^i$ and $V_2^i$ are the color classes in each $Q_i$ that are connected to ${\cal I}_t$. For each such $Q_i$, since $|V_1^i|=|V_2^i| =m_t \leq \beta(\log n)^{1/3}$ and $|{\cal I}_t|\geq \gamma n$, by Lemma \ref{subsetPHP_hyper} we can find two disjoint balanced complete $4$-partite graphs $(U_1^i,A_1^i,B_1^i,C_1^i)$ and $(U_2^i,A_2^i,B_2^i,C_2^i)$ where $U_1^i$ and $U_2^i$ are subsets of $V_1^i$ and $V_2^i$ respectively, and $A_k^i, B_k^i \mbox{ and } C_k^i$, $k\in\{1,2\}$ are disjoint subsets of ${\cal I}_t$. The size of each color class of these new $4$-partite graphs is at least $\gamma m_t/4$ (see Figure \ref{4_unif_2sided_extension}). 

\begin{figure}[h!] 
\centering
\includegraphics[scale=0.65]{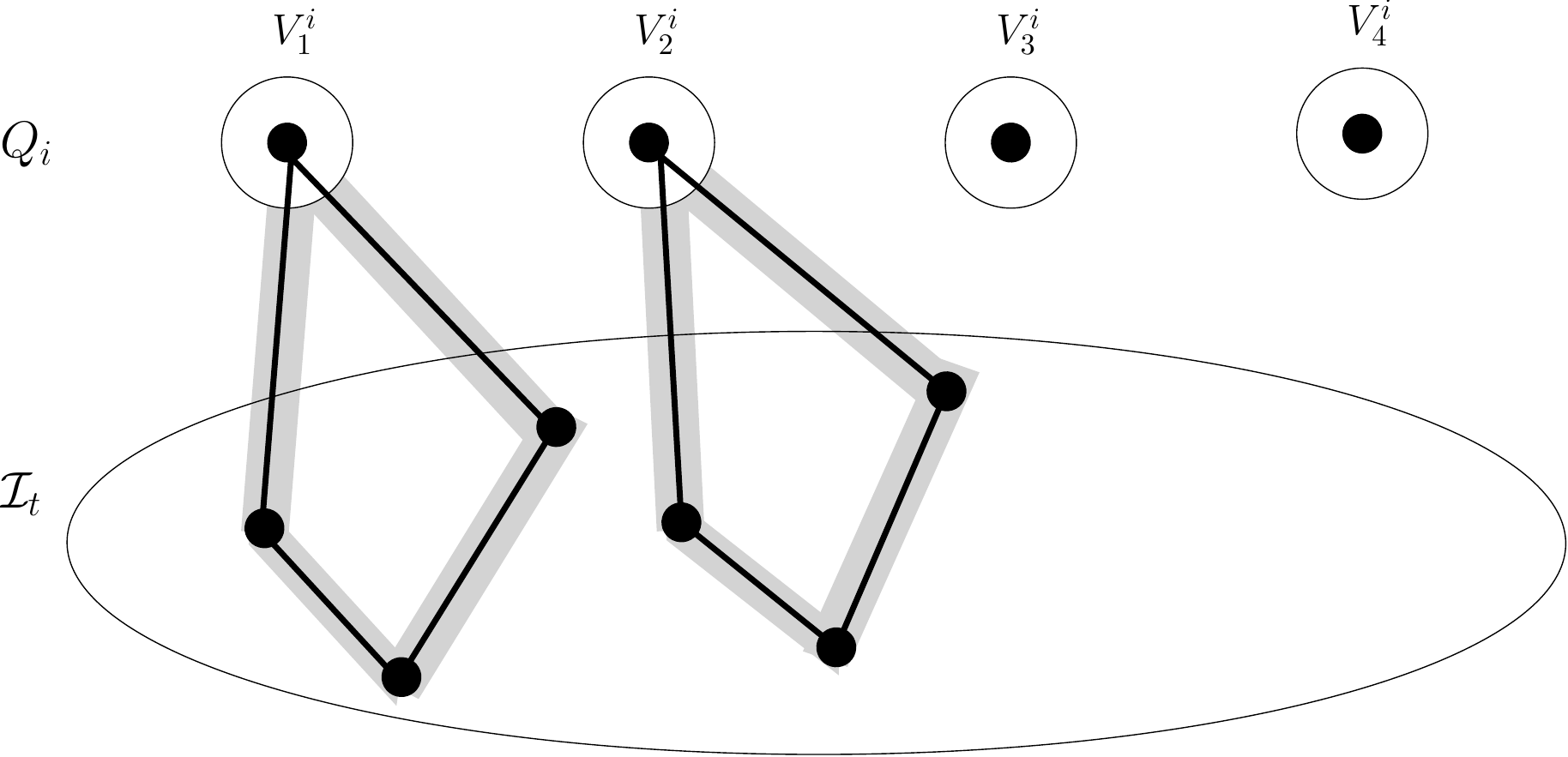} 
\caption{\footnotesize{$V_1^i$ and $V_2^i$ of $Q_i$ are connected to ${\cal I}_t$. The Shaded quadrilaterals represent density, while the dark black quadrilateral is balanced complete $4$-partite $4$-graph}}
\label{4_unif_2sided_extension}
\end{figure}

We remove the vertices of these new $4$-partite graphs from their respective sets and add them to ${\cal F}_{t+1}$. Removing these vertices from $V_1^i$ and $V_2^i$ creates an imbalance in the leftover part of $Q_i$ ($V_3^i$ and $V_4^i$ have more vertices). To restore the balance, we discard (add to ${\cal I}_t$) some arbitrary $|U_1^i| = |U_2^i|$ vertices each from $V_3^i$ and $V_4^i$. The new $4$-partite graphs use $6\gamma m_t/4$ vertices from ${\cal I}_t$. Therefore even after discarding the vertices from $V_3^i$ and $V_4^i$ the net increase in the size of our cover is $\gamma m_t$, while all the $4$-partite graphs are balanced. 
We repeat this procedure for every $Q_i \in {\cal F}'$ and add the leftover part of $Q_i$ and all $4$-partite graphs in ${\cal F}_t \setminus {\cal F}'$, to ${\cal F}_{t+1}$. Now, we split each $4$-partite graph in ${\cal F}_{t+1}$ into disjoint balanced complete $4$-partite graphs, such that each has a color class of size $m_{t+1} = \beta\sqrt{\log m_t}$ (we assume divisibility). Since $|V({\cal F}')|\geq \gamma n$, by the above observation, we have $|V({\cal F}_{t+1})| \geq |V({\cal F}_t)| + \gamma^2 n$, hence ${\cal F}_{t+1} > {\cal F}_{t}$.  
\vskip6pt

\vskip6pt
Similarly, if there are `many' edges that uses two vertices in ${\cal I}_t$ and two vertices from $V({\cal F}_t)$ then we get ${\cal F}_{t+1} > {\cal F}_{t}$.
First note that the number of pairs of vertices of any $4$-partite graph $Q_i \in {\cal F}_t$ is $O(\log n)^{2/3}$, so the number of pairs of vertices within the $4$-partite graphs in ${\cal F}_t$ is $O(n(\log n)^{2/3}) = o{n\choose 2}$. Therefore the total number of edges, containing two vertices within a $Q_i \in {\cal F}_t$ and two vertices in ${\cal I}_t$ is $o{n\choose 4}$, hence we ignore such edges. Let $Q_i = (V_1^i,V_2^i,V_3^i,V_4^i)$ and $Q_j = (V_1^j,V_2^j,V_3^j,V_4^j)$ be a pair of $4$-partite graphs in ${\cal F}_t$ we say that ${\cal I}_t$ is {\em connected} to a pair of color classes $(V_p^i,V_q^j)$, $(1\leq p,q\leq 4)$, if $d_4\left(V_p^i, V_q^j,{{\cal I}_t\choose 2}\right)\geq 2\gamma$. We say that ${\cal I}_t$ is {\em $k$-sided} to a pair $(Q_i,Q_j) \in {{\cal F}_t\choose 2}$ if ${\cal I}_t$ is connected to $k$-pairs in $\{V_1^i,V_2^i,V_3^i,V_4^i\}\times\{V_1^j,V_2^j,V_3^j,V_4^j\}$. Assume that ${\cal I}_t$ is at least $9$-sided to a $4\gamma$-fraction of pairs of $4$-partite graphs in ${{\cal F}_t\choose 2}$. By a simple greedy procedure, (Lemma \ref{folklore_min_degree_subgraph} and the $2$-graph analog of Lemma \ref{folklore_matching}) we get a disjoint set of pairs, $M' \subset {{\cal F}_t\choose 2}$, such that for each pair $(Q_i,Q_j)\in M'$, ${\cal I}_t$ is at least $9$-sided to $(Q_i,Q_j)$ and the number of vertices covered by $M'$ is at least $\gamma n$.

\begin{figure}[h!] 
\centering
\includegraphics[scale=0.65]{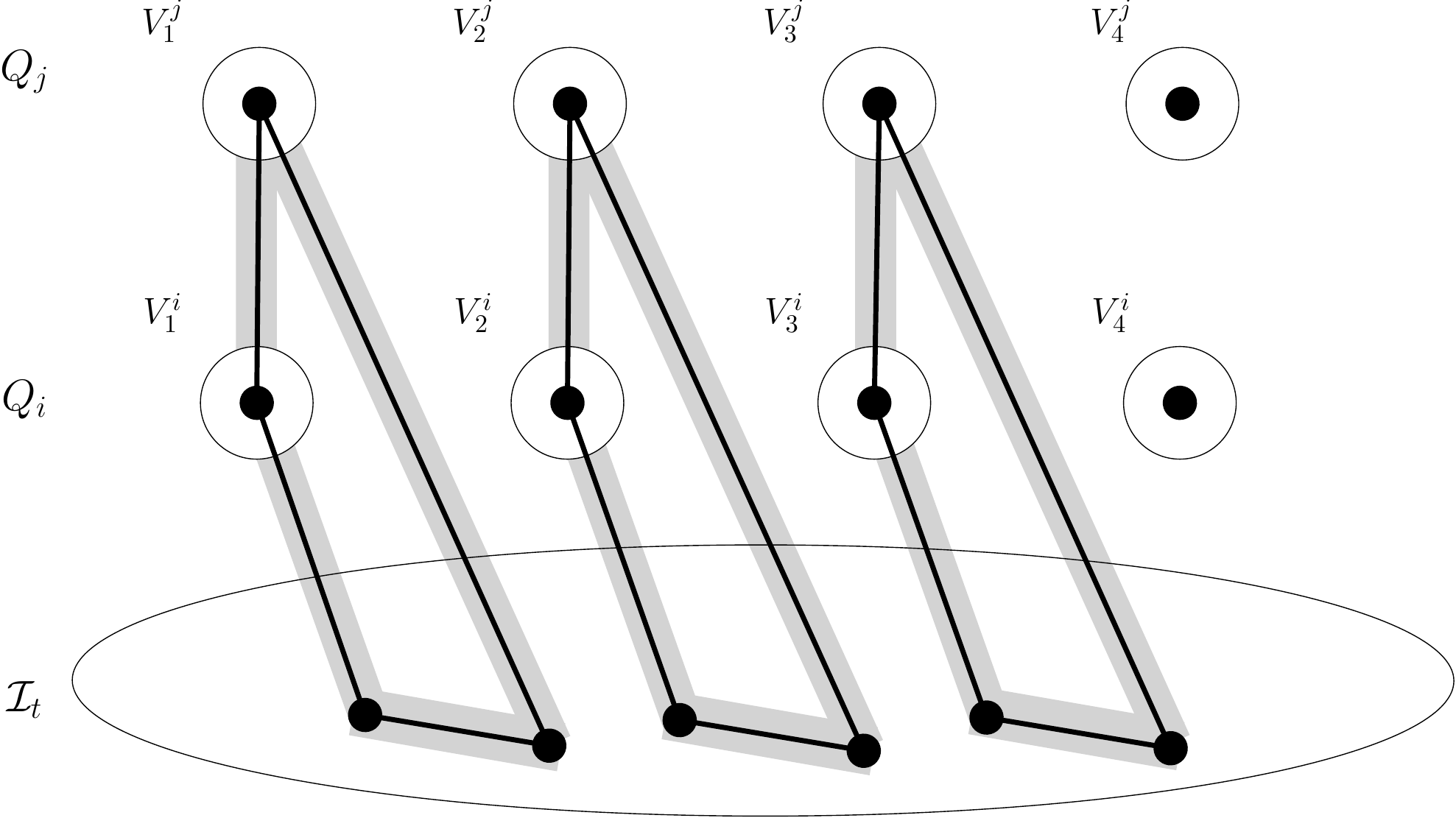} 
\caption{\footnotesize{The pair ($Q_i,Q_j$) is at least $9$-sided to ${\cal I}_t$. The Shaded quadrilaterals represent density, while the dark black quadrilateral is balanced complete $4$-partite $4$-graph}}
\label{4_unif_9Sided_pair_extended}
\end{figure}

\noindent For every $(Q_i,Q_j) \in M'$ we proceed as follows: Since ${\cal I}_t$ is connected to at least $9$ pairs of color classes in $\{V_1^i,V_2^i,V_3^i,V_4^i\}\times\{V_1^j,V_2^j,V_3^j,V_4^j\}$, it is easy to see that we can find $3$ disjoint pairs of color classes such that ${\cal I}_t$ is connected to each of them (say $(V_1^i,V_1^j),(V_2^i,V_2^j) \mbox{ and } (V_3^i,V_3^j)$ are such disjoint pairs of color classes). We have $|V_1^i|=|V_1^j| = m_t \leq \beta(\log n)^{1/3}$ and $|{\cal I}_t|\geq \gamma n$, so by definition of connectedness, the induced hypergraph, $H\left(V_1^i,V_1^j,{{\cal I}_t\choose 2}\right)$ satisfies the conditions of Lemma \ref{subsetPHP_hyper_2}. Therefore by Lemma \ref{subsetPHP_hyper_2} we remove balanced complete $4$-partite $4$-graphs, such that each of them has one color class in $V_1^i$, one in $V_1^j$ and two color classes in ${\cal I}_t$. Note that the conditions of Lemma \ref{subsetPHP_hyper_2} are satisfied until in total we remove at least $\gamma m_t/2$ vertices each from  $V_1^i$ and $V_1^j$. We repeat the same process with $(V_2^i,V_2^j) \mbox{ and } (V_3^i,V_3^j)$ (see Figure \ref{4_unif_9Sided_pair_extended}). In total these new balanced complete $4$-partite $4$-graphs use $3\gamma m_t$ vertices from ${\cal I}_t$ and when we remove the vertices of the new $4$-partite graphs, from $Q_i$ and $Q_j$ the remaining parts of $Q_i$ and $Q_j$ are not balanced ($V_4^i$ and $V_4^j$ have extra vertices). To restore the balance we remove an arbitrary set of vertices each from $V_4^i$ and $V_4^j$ (equal to the difference in color classes). Still the net increase in the number of vertices in the cover is at least $2\gamma m_t$. 

\vskip 6pt

Applying this procedure for every $(Q_i,Q_j) \in M'$ and making all of the $4$-partite graphs of the same size, by splitting, we get ${\cal F}_{t+1}$. Again we have all $4$-partite graphs balanced with color classes of size $m_{t+1} = \beta\sqrt{\log m_t}$ and ${\cal F}_{t+1} > {\cal F}_{t}$.
\vskip 10pt

\noindent Now, if we can not get ${\cal F}_{t+1} > {\cal F}_{t}$, we must have that, in almost all $4$-partite graphs in ${\cal F}_{t}$, at most one color class is connected to ${\cal I}_t$ and almost all pairs of $4$-partite graphs in ${{\cal F}_{t}\choose 2}$ are at most $8$-sided. In particular this implies that for a typical vertex $v\in {\cal I}_t$ we have $$deg_4\left(v,{{\cal I}_t\choose 2}\times V\left({\cal F}_t\right)\right)\leq \left(\frac{1}{4} + 6\gamma\right)|V\left({\cal F}_t\right)|{|{\cal I}_t|\choose 2}$$ and $$deg_4\left(v, {\cal I}_t\times {V\left({\cal F}_t\right)\choose 2}\right) \leq \left(\frac{1}{2} + 16\gamma\right) |{\cal I}_t|{|V\left({\cal F}_t\right)|\choose 2}.$$

\noindent From (\ref{minDegOrig}), (\ref{denI}) and the above degree bounds, for a typical vertex $v\in{\cal I}_t$ we have 
\begin{align}
&deg_4\left(v,{V({\cal F}_t)\choose 3}\right)\notag{}\\
 \geq& \left(\frac{37}{64}-6\sqrt{\gamma}\right){n \choose 3} - d_4\left(v, {\cal I}_t \times {V({\cal F}_t)\choose 2}\right) - d_4\left(v, {{\cal I}_t\choose 2} \times V({\cal F}_t)\right) - d_4\left(v,{{\cal I}_t\choose 3}\right)\notag{}\\
\geq & \left(\frac{37}{64}-6\sqrt{\gamma}\right){n \choose 3} - \left(\frac{1}{2} + 16\gamma\right) |{\cal I}_t|{|V({\cal F}_t)|\choose 2} - \left(\frac{1}{4}+6\gamma\right) {|{\cal I}_t|\choose 2}|V({\cal F}_t)| - \gamma {|{\cal I}_t|\choose 3}\notag{}\\
\geq & \left(\frac{37}{64}-30\sqrt{\gamma}\right){|V({\cal F}_t)| \choose 3}\notag{}
\end{align}
where the last inequality holds when $|{\cal I}_t| \geq \gamma n$ and $|V({\cal F}_t)|\geq n/4$. 

\vskip6pt
For a vertex $v$, consider the edges that $v$ makes with $3$-sets of vertices within a $Q_i \in {\cal F}_t$. The number of $3$-sets of vertices of any $Q_i \in \cal{F}_t$ is $O(\log n)$, (as the size of $Q_i$ is at most $\beta\log^{1/3}n$), hence the total number of $3$-sets of vertices within the $4$-partite graphs in ${\cal F}_t$ is $O(n\log n) = o{n\choose 3}$. Similarly the number of $3$-sets of vertices which uses $2$ vertices from a $Q_i$ and one vertex from some other $Q_j$ is $O(n^2\log n) = o{n\choose 3}$. Therefore, for any vertex $v$, we ignore these types of edges and we will only consider the edges that $v$ makes with $3$-sets of vertices $(x,y,z),\;x \in V(Q_i),\;y \in V(Q_j),\; z\in V(Q_k),\; i\neq j\neq k$. By the above observation, for the minimum degree of a typical vertex $v\in {\cal I}_t$ we still have \beq \label{min_cross_Degree} deg_4\left(v,{V({\cal F}_t)\choose 3}\right)\geq \left(\frac{37}{64}-40\sqrt{\gamma}\right){|V({\cal F}_t)| \choose 3} \eeq 

\vskip 6pt

\noindent Let $Q_i = (V_1^i,\ldots,V_4^i)$, $Q_j = (V_1^j,\ldots,V_4^j)$ and $Q_k = (V_1^k,\ldots,V_4^k)$ be three $4$-partite graphs in ${\cal F}_t$, we say that ${\cal I}_t$ is {\em connected} to a triplet of color classes $(V_p^i,V_q^j,V_r^k)$, $1\leq p,q,r \leq 4$, if $d_4({\cal I}_t,(V_p^i \times V_q^j\times V_r^k)) \geq 2\gamma$. For $(Q_i,Q_j,Q_k) \in {{\cal F}_t \choose 3}$ we consider the link graph $L_{ijk}$ as defined above (with ${\cal I}_t$ playing the role of the set $Z$).
\vskip6pt

For a constant $\eta >0$, we say that ${\cal I}_t$ is $(\eta, s)$-{\em connected} to ${\cal F}_t$, if there is a subset of triplets of $4$-partite graphs, $T \subset  {{\cal F}_t \choose 3}$, such that for each triplet $(Q_i,Q_j,Q_k) \in T$, the link graph, $L_{ijk}$ has $s$ edges and $|T| \geq \eta {|{\cal F}_t| \choose 3}$. 

\vskip6pt
A simple calculation, using (\ref{min_cross_Degree}), implies that if ${\cal I}_t$ is $(\gamma^{1/3}, s)${\em-connected} to ${\cal F}_t$ for some $s\leq 36$, then we also have that ${\cal I}_t$ is $(\sqrt{\gamma}, \geq38)${\em-connected} to ${\cal F}_t$.


%
 

\vskip4pt

\vskip 10pt

We consider the following cases based on the way ${\cal I}_t$ is connected to ${\cal F}_t$ and show that either we get ${\cal F}_{t+1} > {\cal F}_{t}$ or $H$ is extremal. First assume that ${\cal I}_t$ is $(32\gamma, \geq37)$-{\em connected} to ${\cal F}_t$ such that for every triplet $(Q_i,Q_j,Q_k) \in T$, the link graph $L_{ijk}$ is not isomorphic to $H_{ext}$. Then by lemma \ref{folklore_min_degree_subgraph} and lemma \ref{folklore_matching} there exists a set a disjoint set of triplets of $4$-partite graphs, $T' \subset T$, such that for each triplet $(Q_i,Q_j,Q_k) \in T'$ the link graph $L_{ijk}$ has at least $37$ edges and is not isomorphic to $H_{ext}$. Furthermore, the number of vertices covered by $T'$ is at least $\gamma n$.

\vskip 4pt
\noindent Now, for each $(Q_i,Q_j,Q_k) \in {T}'$, since $L_{ijk}$ has at least $37$ edges and $L_{ijk}\neq H_{ext}$, using Lemma \ref{extensionLemma} we extend $(Q_i,Q_j,Q_k)$ to add at least $\gamma m_t/16$ vertices to our cover. Clearly if we extend every triplet in $T'$ the net increase in the size of our cover is at least ${\gamma}^2n/16$ (as the size of $T'$ is at least $\gamma n$). Similarly as above we can split the $4$-partite graphs to make them of the same size and get ${\cal F}_{t+1} > {\cal F}_{t}$.

\vskip 10pt 
\noindent On the other hand, if there is no such $T$, then we  must have that ${\cal I}_t$ is not $(\gamma^{1/3},s)$-{\em connected} to ${\cal F}_t$ for any $s\leq 36$, because otherwise, as observed above, we will get such a $T$. So, roughly speaking, for almost all triplets of $4$-partite graphs in ${{\cal F}_t\choose 3}$, we have that the link graph of the triplet has exactly $37$ edges and is isomorphic to $H_{ext}$. Call a $4$-partite graph $Q_i \in {\cal F}_t$ {\em good}, if for almost all pairs of other $4$-partite graphs $Q_j,Q_k$, we have that the link graph $L_{ijk}$ is isomorphic to $H_{ext}$. 
\vskip 10pt
By the above observation, almost all $4$-partite graphs (covering $\geq (1-2\gamma^{1/3})|V({\cal F}_t)|$ vertices) are good. By a simple greedy procedure we find a set of disjoint triplets of $4$-partite graphs, $T_g$, such that for each triplet $(Q_i,Q_j,Q_k) \in T_g$, the link graph $L_{ijk}$ is isomorphic to $H_{ext}$ and all good $4$-partite graphs are part of some triplet in $T_g$. Let the set of $4$-partite graphs covered by $T_g$ be ${\cal F}_g$, clearly $|V({\cal F}_g)|\geq (1-2\gamma^{1/3})|V({\cal F}_t)|$. With relabeling we may also assume that in each triplet $(Q_i,Q_j,Q_k) \in T_g$, $V_1^i,V_1^j$ and $V_1^k$ are the color classes corresponding to the vertices of the link graph $L_{ijk}$ that intersect every edge of $L_{ijk}$.  
\vskip 10pt 

For every $(Q_i,Q_j,Q_k) \in T_g$, by definition of connectedness and the sizes of the $Q_i,Q_j$ and $Q_k$, the $4$-partite hypergraph induced by $(V_1^i,V_2^j,V_2^k,{\cal I}_t)$ satisfies the conditions of Lemma \ref{4partVolArg}. Hence applying Lemma \ref{4partVolArg} we find a balanced complete $4$-partite graphs $X_{i1}$ in $H(V_1^i,V_2^j,V_2^k,{\cal I}_t)$. We also find two more disjoint balanced complete $4$-partite graphs, $X_{i2}$ and $X_{i3}$ in $H(V_1^i,V_3^j,V_3^k,{\cal I}_t)$ and $H(V_1^i,V_4^j,V_4^k,{\cal I}_t)$. Since $|{\cal I}_t|\geq \gamma n$, we can find these complete $4$-partite graphs that are disjoint from each other. The sizes of a color class in each of $X_{11},X_{12}$ and $X_{13}$ is $\beta\sqrt{\log m_t}$.  
\vskip 10pt
 
Similarly for each of $V_1^j$ and $V_1^k$ we find $3$ disjoint balanced complete $4$-partite graphs $X_{jp}$ and $X_{kp}$ in $H(V_p^i,V_1^j,V_p^k,{\cal I}_t)$ and $H(V_p^i,V_p^j,V_1^k,{\cal I}_t)$, $(2\leq p\leq 4)$, respectively. All of these these balanced complete $4$-partite graphs are disjoint from each other and the size of a color class in each one of them is $\beta\sqrt{\log m_t}$. 

By the definition of connectedness, the structure of the link graph $L_{ijk}$ and the fact that $|{\cal I}_t|\geq \gamma n$, clearly we can find these nine disjoint $4$-partite graphs. And as argued above we repeat this process (remove another set of $9$ such $4$-partite graphs) until in total we remove $\gamma m_t/2$ vertices from each of $V_1^i$, $V_1^j$ and $V_1^k$, while $\gamma m_t/3$ vertices from each of the other classes in $Q_i,Q_j$ and $Q_k$. 
\vskip 10pt

Note that these tripartite graphs in total use $3\gamma m_t/2$ vertices from ${\cal I}_t$. But this creates an imbalance among the color classes of the remaining parts of $Q_i$, $Q_j$ and $Q_k$ ($V_1^i,V_1^j$ and $V_1^k$ have fewer vertices), to restore the balance we will have to discard $\gamma m_t/6$ vertices from each color class in $Q_i$, $Q_j$ and $Q_k$ except $V_1^i$, $V_1^j$ and $V_1^k$. Which leaves us with no net gain in the size of the cover. Therefore we will not discard any vertices from these color classes at this time and say that these color classes have extra vertices. We proceed in similar manner for each triplet in $T_g$. So we have about $\gamma n/24$ extra vertices altogether.\\

\noindent Denote by $V_1^g,V_2^g,V_3^g$ and $V_4^g$ the union of the corresponding color classes of remaining parts of $4$-partite graphs in ${\cal F}_g$. Clearly $|V_2^g| = |V_3^g| = |V_4^g| \geq (1-2\gamma^{1/3})|V({\cal F}_t)|/4 - \gamma n /24 \geq (1/16 - 3\gamma^{1/3})n$. The last inequality follows from the lower bound on size of $V({\cal F}_t)$ above, and the fact that $\gamma$ is a small constant. We will show that either we can increase the size of our cover or we have \beq\label{emptyV2V3V4} d_4\left(V_2^g\cup V_3^g\cup V_4^g\right) \leq \sqrt{\gamma}.\eeq 


\noindent For $d_4(V_2^g\cup V_3^g\cup V_4^g)$, we only consider those edges that use exactly one vertex from a $4$-partite graph $Q_i$, as the number of edges of other types is $o(n^4)$. Assume that $d_4(V_2^g\cup V_3^g\cup V_4^g) \geq \sqrt{\gamma}$ then by Lemma \ref{hyperKST} there exist complete $4$-partite graphs in $H|_{V_2^g\cup V_3^g\cup V_4^g}$ covering at least $\gamma n$ vertices. We remove some of these $4$-partite graphs (possibly with splitting and discarding part of it) such that from no color class we remove more then the number of extra vertices in that color class. Adding these new $4$-partite graphs to our cover increases the size of our cover by at least $\gamma^2 n$ vertices. As we will not need to discard vertices from $(V_2^g\cup V_3^g\cup V_4^g)$ for rebalancing. Instead the extra vertices are part of these new $4$-partite graphs. In the remaining parts of $V_2^g$, $V_3^g$ and $V_4^g$ we arbitrarily discard some extra vertices to restore the balance in the $4$-partite graphs.\\ 

\noindent Similarly we will show that either we can increase the size of our cover or we have \beq\label{emptyV2V3V4_to_I} d_4\left(V_2^g\cup V_3^g \cup V_4^g,{{\cal I}_t\choose 3}\right) \leq \sqrt{\gamma} .\eeq Indeed assume the contrary, i.e. $d_4(V_2^g\cup V_3^g\cup V_4^g,{{\cal I}_t\choose 3}) \geq \sqrt{\gamma}$, then since both $|{\cal I}_t|$ and $|V_2\cup V_3^g\cup V_4^g|$ are at least $\gamma n$, by Lemma \ref{hyperKST} we can find disjoint complete $4$-partite graphs with one color class in $V_2\cup V_3^g\cup V_4^g$ and three in ${\cal I}_t$ covering at least $\gamma^2n/2$ vertices. And again as above we can add these $4$-partite graphs to our cover and increase the size of our cover as we have extra vertices in $V_2^g\cup V_3^g\cup V_4^g$. By the same reasoning we can prove that there are very few edges that uses two vertices from $V_2\cup V_3^g\cup V_4^g$ and two from ${\cal I}_t$.\\ 

From the above observations about the size of ${\cal F}_g$ and (\ref{para}) we have that $|V_2^g\cup V_3^g \cup V_4^g \cup {\cal I}_t| \geq (3/4 - \alpha)n$. Therefore if we can not increase the size of our cover significantly (by at least $\gamma^2n/16$ vertices), then by (\ref{denI}), (\ref{emptyV2V3V4}) and (\ref{emptyV2V3V4_to_I}) we get that $d_4(V_2^g\cup V_3^g\cup V_4^g \cup {\cal I}_t) < \alpha$. Hence $H$ is $\alpha$-extremal. 

\vskip 25pt

\section{The Extremal Case}\label{extCase}

Here our graph $H$ is in the Extremal Case, i.e. {\em there is a $B\subset V(H)$ such that} 
\begin{itemize}
\item $|B|\geq (\frac{3}{4}-\alpha) n$
\item $d_4(B) < \alpha$.
\end{itemize}

\noindent We assume that $n$ is sufficiently large and $\alpha$ is a sufficiently small constant $< 1$. Let $A=V(H)\setminus B$, by shifting some vertices between $A$ and $B$ we can have that $A=n/4$ and $B=3n/4$ as $n \in 4{\mathbb{Z}}$ (we still keep the notation $A$ and $B$). It is easy to see that we still have \beq\label{extDen}d_4(B) < 6\alpha\eeq  

\noindent Since we have $$\delta_1(H) \geq {n-1\choose 3} - {3n/4\choose 3} + 1={n-1\choose 3} - {|B|\choose 3} + 1$$this together with (\ref{extDen}) implies that almost all $4$-sets of $V(H)$ are edges of $H$ except $4$-sets of $B$. Thus roughly speaking we have that almost every vertex $b\in B$ makes edges with almost all $3$-sets of vertices in ${A\choose 3}$,  with almost all $3$-sets of vertices in $B\setminus\{b\} \times {A\choose 2}$ and with almost all $3$-sets of vertices in ${B\setminus\{b\}\choose 2} \times A$  and vice versa. Therefore, we will basically match every vertex in $A$ with a distinct $3$-set of vertices in ${B\choose 3}$ (disjoint from all $3$-sets matched with other vertices in $A$) to get the perfect matching. However some vertices may be `{\em atypical}', in the sense that they may not have this connectivity structure hence we will first find a small matching that covers all such `{\em atypical}' vertices. For the remaining `{\em typical}' vertices we will show that they satisfy the conditions of K{\"o}nig-Hall theorem, hence we will match every remaining vertex in $A$ with a distinct $3$-sets of remaining vertices in $B$. 

\vskip10pt
A vertex $a\in A$ is called {\em exceptional} if it does not make edges with almost all $3$-sets of vertices in $B$, more precisely if $$deg_4\left(a,{B\choose 3}\right) < \left(1-\sqrt{\alpha}\right){|B|\choose 3}$$A vertex $a\in A$ is called {\em strongly exceptional} if it makes edges with very few $3$-sets in $B$, more precisely if $$deg_4\left(a,{B\choose 3}\right) < {\alpha}^{1/3}{|B|\choose 3}$$Similarly a vertex $b\in B$ is called {\em exceptional} if it makes edges with many $3$-sets of vertices in $B$ more precisely if $$deg_4\left(b,{B\setminus\{b\}\choose 3}\right) > \sqrt{\alpha}{|B|\choose 3}$$A vertex $b\in B$ is called {\em strongly exceptional} if it makes edges with almost all $3$-sets of vertices in $B$ more precisely if $$deg_4\left(b,{B\setminus\{b\}\choose 3}\right) > (1-\alpha^{1/3}){|B|\choose 3}$$
\vskip10pt
\noindent Denote the set of \textit{exceptional} and \textit{strongly exceptional} vertices in $A$ (and $B$) by $X_A$ and $SX_A$ respectively (similarly $X_B$ and $SX_B$). Easy calculations using (\ref{minDegree}) and (\ref{extDen}) yields that $|X_A|\leq 18\sqrt{\alpha}|A|$ and $|X_B|\leq 18\sqrt{\alpha}|B|$ and for the {\em strongly exceptional} sets we have  $|SX_A|\leq 40\alpha|A|$ and $|SX_B|\leq 40\alpha|B|$. The constants are not the best possible but we choose them for ease of calculation.

\vskip10pt
If we have both $SX_B$ and $SX_A$ non empty, (say $b\in SX_B$ and $a\in SX_A$) then since then we can exchange $a$ with $b$ and reduce the size of both $SX_B$ and $SX_A$, as it is easy to see that both $a$ and $b$ are not {\em strongly exceptional} in their new sets. Hence one of the sets $SX_A$ and $SX_B$ must be empty. 

\vskip10pt Assume $SX_B \neq \emptyset$. By definition of $SX_B$, for every vertex $b \in SX_B$, we have $deg_3(b,{B\choose 3}) \geq (1-{\alpha}^{1/3}){|B|\choose 3}$. This together with the bound on the size of $SX_B$ implies that we can greedily find $|SX_B|$ vertex disjoint edges in $H|_B$ each containing exactly one vertex of $SX_B$. We also select $|SX_B|$ other vertex disjoint edges such that each edge has two vertices in $B\setminus X_B$ and the two other vertices are in $A$. We can clearly find such edges because by (\ref{minDegree}) and definition of $X_B$ every vertex in $B\setminus X_B$ makes edges with at least $(1-3\sqrt{\alpha})$-fraction of $3$-sets in $B\times{|A|\choose 2}$. We remove the vertices of these edges from $A$ and $B$ and denote the remaining set by $A'$ and $B'$. Let $|A'|+|B'| = n'$, by the above procedure we have $n' = n-8|SX_B|$, $|A'|=|A|-2|SX_B|$ and $|B'|=|B|-6|SX_B|$ hence we get $|B'| = 3|A'| = 3n'/4$. 
\vskip10pt
In case $SX_A \neq \emptyset$ (and $SX_B = \emptyset$), we will first eliminate the vertices in $SX_A$. Note that in this case any vertex $b\in B$ is exchangeable with any vertex in $SX_A$, because if there is a vertex $b\in B$ such that $deg_4(b,{B\choose 3}) \geq \alpha^{1/3}{|B|\choose 3}$ then we can replace $b$ with any vertex $a\in SX_A$ to reduce the size of $SX_A$ (as the vertex $b$ is not {\em strongly exceptional} in $A$ and $a$ can not be {\em strongly exceptional} in the set $B$). Therefore we consider the whole set $SX_A\cup B$. By (\ref{minDegree}) for any vertex $v\in SX_A\cup B$ we have \begin{align*}deg_4\left(v,{SX_A \cup B\choose 3}\right)\geq& (|SX_A|-1){|B|\choose 2} + {(|SX_A|-1)\choose 2}|B| + {(|SX_A|-1)\choose 3} + 1\\ \geq&  {4(|SX_A|-1)\choose 3}+1\end{align*}
where the last inequality holds when $n$ is large enough and $|SX_A$ is small. So with a simple greedy procedure we find $|SX_A|$ disjoint edges in $H|_{SX_A\cup B}$ and remove these edges from $H$. Note that this is the only place where we critically use the minimum degree. We let $A' = A\setminus SX_A$ and $B'$ has all other remaining vertices. Again as above we have $n' = n-4|SX_A|$, $|A'|=|A|-|SX_A|$ and $|B'|=|B|-3|SX_A|$ hence we get $|B'| = 2|A'| = 3n'/4$. 

\vskip4pt
Having dealt with the \textit{strongly exceptional} vertices, the vertices of $X_A$ and $X_B$ in $A'$ and $B'$ can be eliminated using the fact that their sizes are much smaller than the crossing degrees of vertices in those sets. For instance as observed above we have $|X_A|\leq 18\sqrt{\alpha}|A|$ while for any vertex $a\in X_A$, we have that $deg_4(a,{B'\choose 2}) \geq \alpha^{1/3}{|B'|\choose 3}/2$ (because $a\notin SX_A$). Therefore by a simple greedy procedure for each $a\in X_A$ we delete a disjoint edge that contains $a$ and three vertices from $B'$. Similarly for each $b \in X_B$ we delete an edge that contains $b$ and uses one vertex from $A'$ and the other two vertices from $B'$ distinct from $b$. Clearly we can find such disjoint edges, hence we removed a partial matching that covers all vertices in the {\em strongly exceptional} and {\em exceptional} sets. 

\vskip 10pt
Finally in the leftover sets of $A'$ and $B'$ (denote them by $A''$ and $B''$, by construction we still have $|B''|=3|A''|$) we will find $|A''|$ disjoint edges each using one vertex in $A''$ and three vertices in $B''$. Note that for every vertex  $a\in A''$ we have $deg_4(a,{B''\choose 3})\geq (1-2\alpha^{1/3}){|B''|\choose 3}$ (as $a\notin X_A$). We say that a vertex $b_i$ and a pair $b_j,b_k$ in $B''$ are {\em good} for each other if $(b_i,b_j,b_k,a_l)\in E(H)$ for at least $(1-40{\alpha}^{1/4})|A''|$ vertices $a_l$ in $A''$. We have that any vertex $b_i \in B''$ is {\em good} for at least $(1-40{\alpha}^{1/4}){|B''|\choose 2}$ pairs of vertices in $B''$ (again this is so because $b_i\notin X_B$). We call such a $(b_i,b_j,b_j)$ a good triplet. 
\vskip 10pt
We randomly select a set $T_1$ of $100{\alpha}^{1/4} |B''|$ vertex disjoint $3$-sets of vertices in $B''$. By the above observation with high probability every vertex $a\in A''$ make edges in $H$ with at least $3|T_1|/4$ triplets in $T_1$ and every triplet in $T_1$ makes an edge with at least $3|A''|/4$ vertices in $A''$. In $B''\setminus V(T_1)$ still every vertex is {\em good} for almost all pairs (as the size of $T_1$ is very small). 
\vskip 10pt
We cover vertices in $B''\setminus V(T_1)$ with disjoint good triplets (i.e. the triplet makes an edge in $H$ with at least $(1-40{\alpha}^{1/4})|A''|$ vertices in $A''$. This can be done by considering a $3$-graph with vertex set $B''\setminus V(T_1)$ and all the {\em good} triplets as its edges. As argued above every vertex is good for almost all pairs. We can find a perfect matching in this $3$-graph (see \cite{Khan_3u_vertDeg}). Let the set of triplets in this perfect matching be $T_2$.
\vskip 10pt
Now construct an auxiliary bipartite graph $G(L,R)$, such that $L= A''$ and vertices in $R$ corresponds to the triplets in $T_1$ and $T_2$. A vertex in $a_l\in L$ is connected to a vertex $y\in R$ if the triplet corresponding to $y$ (say $b_i,b_j,b_k$) is such that $(b_i,b_j,b_k,a_l)\in E(H)$. We will show that $G(L,R)$ satisfies the K{\"o}nig-Hall criteria. Considering the sizes of $A''$ and $T_1$ it is easy to see that for every subset $Q\subset R$ if $|Q|\leq (1-40\alpha^{1/4})|A''|$ then $|N(Q)|\geq |Q|$. When $|Q|>(1-40\alpha^{1/4})|A''|$ (using $|B''|=2|A''|$) any such $Q$ must have at least $6|T_1|/10$ vertices corresponding to pairs in $T_1$, hence with high probability $N(Q) = L \geq |Q|$. Therefore there is a perfect matching of $R$ into $L$. This perfect matching in $G(L,R)$ readily gives us a matching in $H$ covering all vertices in $A''$ and $B''$, which together with the edges we already removed (covering {\em strongly exceptional} and {\em exceptional} vertices) is a perfect matching in $H$.\hfill{} \qed

\begin{acknowledgement}
This work is partially supported by a DIMACS grant. We would also like to Endre Szemer\'edi and Abdul Basit for helpful comments.
\end{acknowledgement}

\end{document}